\newcommand{\ER}{Erd\H{o}s-R\'{e}nyi }
\newcommand{\phit}[1]{{\tilde{\phi}}}
\newcommand{\at}[1]{\tilde{a}}
\newtheorem*{definition}{Definition}
\newtheorem{assumption}{Assumption}
\newtheorem{proposition}{Proposition}
\newcommand{\vecm}{{\bf m}}
\newcommand{\vecw}{{\boldsymbol w}}
\title{Unstable diffusion in social networks\footnote{
Kobayashi acknowledges financial support from JSPS KAKENHI\ 19H01506 and 20H05633. Onaga acknowledges financial support from JSPS KAKENHI 19K14618 and 19H01506. We would like to thank Takashi Shimizu, who provided valuable comments on an earlier version of this paper.}}
\date{\today}
\author{ Teruyoshi Kobayashi\thanks{kobayashi@econ.kobe-u.ac.jp} \\
Department of Economics, Center for Computational Social Science,\\ Kobe University, Kobe, Japan 
\and Yoshitaka Ogisu \\
Graduate School of Economics, Kobe University, Kobe Japan
\and Tomokatsu Onaga\\ Frontier Research Institute for Interdisciplinary Sciences, \\ Graduate School of Information Sciences, \\Tohoku University, Sendai, Japan}
\begin{document}
\begin{CJK*}{UTF8}{gbsn}

\maketitle

\begin{abstract} \setlength{\baselineskip}{15pt}
How and to what extent will new activities spread through social ties? Here, we develop a more sophisticated framework than the standard mean-field approach to describe the diffusion dynamics of multiple activities on complex networks. We show that the diffusion of multiple activities follows a saddle path and can be highly unstable. In particular, when the two activities are sufficiently substitutable, either of them would dominate the other by chance even if they are equally attractive \emph{ex ante}. When such symmetry-breaking occurs, any average-based approach cannot correctly calculate the Nash equilibrium --- the steady state of an actual diffusion process.  
{\flushleft JEL \emph{classification:} C72, D85, L14}
\end{abstract}

\section{Introduction}

Social activities, new ideas, and innovative technologies spread through networks of social ties formed by friends, colleagues, and followers in social media, such as Facebook, Twitter, and Instagram. 
Social ties (in physical and online spaces) are not only a channel through which information flows, but also a channel through which influence is propagated.
For example, an individual’s decision regarding which activities to join often depends on the fraction and/or the number of friends participating in the activities.
If the majority of friends already joined a certain activity, the activity would be more attractive to that individual than the activities not popular among friends.  The contagious aspect of peer effects through social ties has been extensively studied in models of coordination games on networks~\citep{morris2000contagion,Jackson2008book,young2011dynamics,kreindler2014rapid,jackson-zenou2015games}, in which they examine the effect of network structure on the possibility of cascades. Another framework is based on a utility-maximization problem that incorporates gains from interacting with adjacent players~\citep{ballester2006,chen2018AEJmultiple}. They prove that the equilibrium strategies are determined by the players' positions in the network characterized by their Bonacich centralities~\citep{bonacich1987power}.

In the standard $2\times 2$ coordination games with a single activity (or product, convention, opinion, whatever), the players' decisions are binary (e.g., do it or don't do it). 
However, when there are two competing activities, denoted by $A$ and $B$, the payoff matrix generally becomes $4\times 4$ and each player's option is no longer binary; there are four pure strategies in the strategy set $\{00,01,10,11\}$, where ``$00$'' denotes the strategy of not joining either activity (or the status quo), ``$01$'' (resp.~``$10$'') denotes the strategy of joining activity~$A$ only (resp.~$B$ only), and ``$11$'' denotes the \emph{bilingual option} of joining both $A$ and $B$~\citep{oyama2015bilingual,arigapudi2020bilingual}.\footnote{In \cite{oyama2015bilingual} and \cite{arigapudi2020bilingual}, the strategy ``$00$'' (i.e., do-nothing strategy) is not considered, and each player selects either or both of the two activities. In our study, the strategy ``$00$'' is the status quo for all players except the \emph{seed players} that are initially activated.} 
 The presence of multiple activities allows us to study richer dynamics compared to the binary cascade models.
 First, while there is only one type of strategic shift in the diffusion of a single activity (i.e., $00\to 01$), there are multiple patterns of strategic shifts, e.g., $01\to 10$, $01\to 11$, $10\to 11$, etc, which simply expands the set of possible paths to an equilibrium. 
 Second, we can introduce the reversibility of strategic choices; players may revert their strategies, e.g., $01\to 10\to 01$, in response to changes in the neighbors' states. The propagation process is thus no longer monotonic, where an activity may widely spread temporarily but fade away to be replaced by the other activity, while diffusion is always monotonic in the binary cascade models.\footnote{Note that the standard binary cascade models do not exhibit such a non-monotonic behavior, allowing us to prove the stability of an equilibrium and convergence of an iteration algorithm~\citep{jackson2007diffusion,kobayashi2021dynamics}.} Third, the diffusion process can be more stochastic especially when the two activities are equally attractive. In complex networks, whose structures are far from regular and symmetric, the influence of initially active players is highly heterogeneous. Thus, the popularity of an activity in the terminal state will be strongly affected by the extent to which the activity spreads in the early stage of diffusion.

 When the network structure is complex, unlike regular lattices, star graphs, and circular networks, it is notoriously difficult to calculate the Nash equilibrium in an exact manner even in binary games.
 In the literature, it is common to calculate the Nash equilibrium using a \emph{mean-field} (MF) approximation~\citep{jackson2007diffusion,lopez2008GEBdiffusion,lopez2012GEBinfluence,lelarge2012diffusion,sadler2020diffusion}, assuming that players are sufficiently homogeneous such that the probability of each player being in a given state can be well approximated by the corresponding average over all players.  However, \cite{gleeson2018message} and \cite{kobayashi2021dynamics} quantitatively verify that the MF approximation can be inaccurate, especially when the network connectivity is close to the critical points at which the size of cascades changes drastically. They show that a \emph{message-passing approach} is always more precise than the conventional MF approximation and that an iteration algorithm surely converges to a fixed point that corresponds to the simulated Nash equilibrium.

 In this paper, we study two classes of games that describe multiple-activity diffusion on complex networks: i) coordination games with multiple activities, also known as \emph{bilingual games}~\citep{goyal1997non,immorlica2007role,oyama2015bilingual}, and ii) the utility-based games on networks proposed by \cite{ballester2006} and \cite{chen2018AEJmultiple}. 
 A common property of the two classes of games is that the optimal strategic choices are given by a set of threshold rules: \emph{fractional threshold rules}~\citep{Watts2002} in coordination games, and \emph{absolute threshold rules}~\citep{Granovetter1978} in the utility-based games. 
 In each of these classes, we calculate the Nash equilibrium highly accurately by solving a system of differential equations, called the \emph{approximate master equations} (AME)~\citep{gleeson2011high,gleeson2013binary}. 
 The key benefit of analyzing  with a system of differential equations is that it gives us the dynamical path of the popularity of each activity, i.e., the extent to which each activity spreads over the network at any given point in time.
 While the AME approach is generally more accurate than MF approximations, the simplicity of the MF equations allows us to provide an analytical description of diffusion dynamics using phase diagrams. 
 
 The main results are summarized as follows. First, the system of equations given by the AME approach reveals the diffusion dynamics of multiple activities highly accurately in the sense that the calculated paths well match the simulated diffusion processes and correctly predict the Nash equilibrium. On the other hand, the MF equations, while they are simpler and more analytically tractable, replicate the simulated paths only roughly, and the predicted Nash equilibria can deviate from the simulated ones.\footnote{The inaccuracy of the MF equations is also pointed out by \cite{gleeson2013binary}.} 
 
 Second, we find that there are four distinct regimes characterized by different Nash equilibria, depending on the relative attractiveness of the activities and connectivity of the network. 
 At the boundary of these regimes, a slight change in the relative attractiveness or network connectivity may drastically shift the equilibrium, a phenomenon called \emph{phase transition}~\citep{Watts2002,GaiKapadia2010}.
 Thus, a small change in the attractiveness and/or the network structure may initiate or terminate the widespread diffusion of an activity. This suggests that the diffusion dynamics are unstable at the critical points, and the popularity of an activity is far from proportional to its intrinsic attractiveness.
 
 Third, except at those boundaries, the diffusion dynamics are stable and the equilibrium is highly predictable when the two activities are complementary or neutral. However, this is not necessarily the case when they are substitutes. Suppose that the two activities are perfectly symmetric in the sense that their attractiveness is represented by exactly the same payoff/preference parameters. In theory, we always obtain a common solution for the cascade sizes of symmetric activities, e.g., 40\% of the population adopts activity $A$ and the remaining 40\% adopts activity $B$, because there is no factor that differentiates between the two, at least on average. In fact, this is not necessarily the case in the simulated diffusion processes. The analytical solutions would indeed be correct if the topological properties of the initially active players (i.e., \emph{seed nodes}) are symmetric, but such a situation rarely occurs in complex networks. Our numerical experiments reveal that diffusion processes on complex networks generated from the same degree distribution can still reach totally different Nash equilibria when the two activities are substitutes. This suggests that equally-attractive yet substitutable activities do not necessarily gain equal popularity, and either activity may even dominate the other through the cascade of peer effects.

 The possibility of ``\emph{symmetry breaking}'' also raises an important issue for theoretical studies of network games.  With such an unstable equilibrium, any deterministic equilibrium would fail to predict the ``true'' Nash equilibrium since one of the possible equilibria will inevitably be achieved by chance, depending on the details of network structure that cannot be captured by the degree distribution. Symmetry breaking has long been recognized as a source of diversity in economic development~\citep{matsuyama2002explaining,acemoglu2017asymmetric}, financial globalization~\citep{matsuyama2004financial}, and international trade~\citep{matsuyama2013trade,chatterjee2017endogenous}. To the best of our knowledge, this is the first study that shows why almost equally attractive activities (or technologies, products, etc) can gain totally different levels of popularity. 
 
 In the theoretical analysis, we use stylized networks such as \ER random graphs~\citep{Erdos1959PublMath} and random regular graphs, with which we can obtain exact degree distributions. While this greatly facilitates the analysis, these networks are not necessarily realistic~\citep{Barabasi2016book}.
 To check how well our model would explain diffusion processes on real social networks, we also examine the goodness of fit using empirical data.  Here, we construct a social network of economists based on the acknowledgments of articles published in \emph{American Economic Review} between 2019 and 2020, where nodes represent authors and edges correspond to social ties created by giving and receiving comments.

\section{Diffusion through coordination games}
\subsection{Network structure}

For theoretical analysis, we construct synthetic networks formed by $N$ players, where $N$ is assumed to be sufficiently large. Player $i$ is connected to $k_i$ other players by undirected and unweighted edges. $k_i$ is called the \emph{degree} of player $i$ (or node $i$).
Players at the end of the edges emanating from $i$ are called the \emph{neighbors} of player $i$. 
We consider a whole ensemble of many possible networks in a given class, where a particular network structure is realized with a certain probability. That is, we do not focus on a particular single network, rather we specify the distribution of all possible network structures that would appear in a given network model. Any realized network is, therefore, an instance drawn from the ensemble uniformly at random. Examples of \ER networks~\citep{Erdos1959PublMath} and random regular graphs are shown in Fig.~\ref{fig:random_graphs}a.

\begin{figure}[tb]
    \centering
    \includegraphics[width=15.5cm]{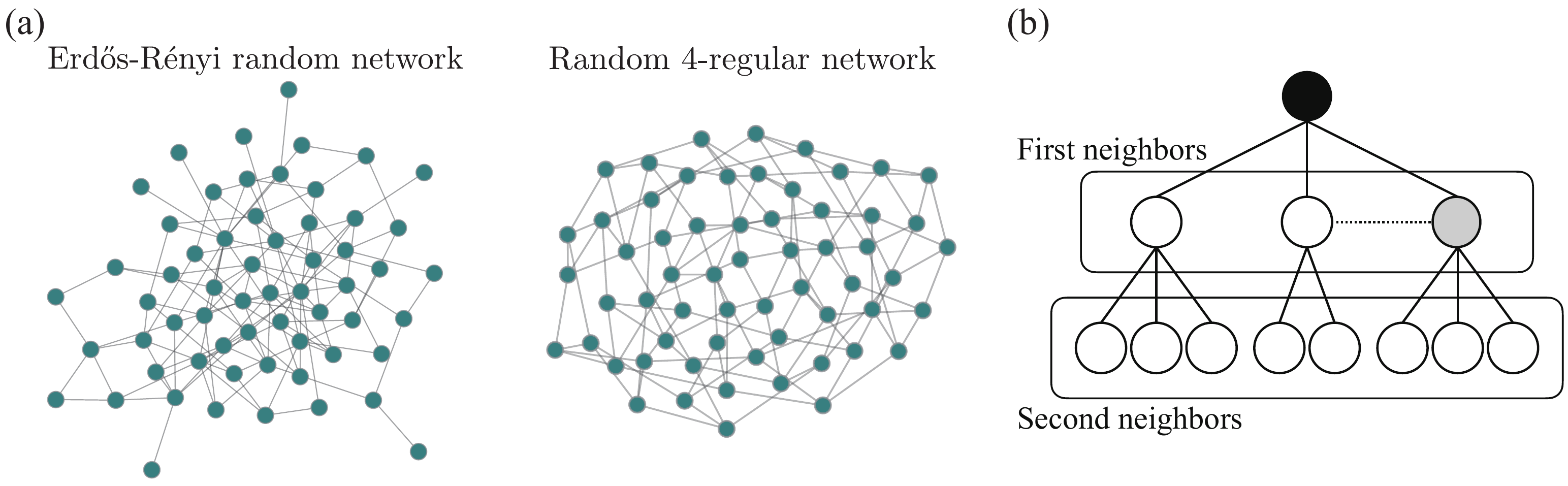}
    \caption{Structure of random networks. (a) Visualization of \ER random networks and random $z$-regular networks. $N=60$ and $z=4$ in both cases. (b) Schematic of a tree-like structure ($z=3$).}
    \label{fig:random_graphs}
\end{figure}

 In \ER networks, any two nodes are connected with probability $q$, and the model is expressed as $G(N,q)$. Let $M$ denote the total number of (undirected) edges.
In the $G(N,q)$ model, the probability that a realized network has $x$ edges is given by
\begin{align}
    \text{Prob}(M=x;G) = \binom{\binom{N}{2}}{x} q^x (1-q)^{\binom{N}{2}-x},
\end{align}
where the probability of observing each particular network with $x$ edges is $q^x (1-q)^{\binom{N}{2}-x}$.
Let $q_k$ denote the degree distribution, namely, the probability that a randomly chosen node has degree~$k$. In the \ER model, $q_k$ is given by a binomial distribution, so it is approximated by a Poisson distribution for large $N$:
\begin{align}
    q_k = \binom{N-1}{k}q^{k}(1-q)^{N-1-k}\approx \frac{z^k e^{-z}}{k!},
\end{align}
where $z=\sum_{k}q_k k$ denotes the average degree.
Practically, we need to determine the maximum value for degree $k$, denoted by $k_{\rm max}$. Throughout the analysis, we set a sufficiently large $k_{\max}$ such that it covers at least $99.9\%$ of the entire distribution, i.e., $k_{\rm max}=\min\{k':\sum_{k=0}^{k'}q_k\geq 0.999\}$.

In random $z$-regular networks, nodes are connected uniformly at random subject to the degree constraint that $k_i=z$ for every $i=1,\ldots, N$. The number of edges is thus prespecified as $M= zN/2$ since every node has degree~$z$.  While we mainly use the \ER model in the following analysis, we will also use, when necessary, random $z$-regular networks to maintain analytical tractability.\footnote{Asymptotic properties of random regular graphs are discussed in \cite{wormald1999models} and \cite{mckay2004short}.}

While we use \ER networks and random $z$-regular graphs in our analysis, the only network property required for our approach is a locally tree-like structure with which the presence of local cycles can be ignored.
For instance, \ER networks are locally tree-like if the connecting probability $q=z/(N-1)$ is sufficiently small (i.e., $z\ll N$)~\citep{newman2018book2nd}. If a network is locally tree-like, the average number of second neighbors, i.e., the nodes at distance two from a starting node, is given by $\sum_{k,\ell}q_k\tilde{q}_\ell k\ell=\sum_{k}q_kk \sum_\ell q_{\ell+1}\ell(\ell+1)/z = z\sum_{\ell} q_\ell(\ell-1)\ell/z=  z^2$, where $\tilde{q}_\ell \equiv (\ell+1)q_{\ell+1}/z$ denotes the excess degree distribution\footnote{The excess degree distribution $\tilde{q}_\ell$ is the probability that the total degree of a randomly selected neighbor is $\ell+1$, i.e., the probability that a randomly selected neighbor of a starting node has $\ell$ edges except the edge emanating from the starting node~\citep{newman2018book2nd}. $\ell$ is called the \emph{excess degree}.}, and we used the fact that the variance of degrees is given by $z$ when the degree distribution is Poissonian. This indicates that each of the $z$ first neighbors of a starting node connects to $z$ second neighbors, on average, and the first and second neighbors of the same starting node will not overlap (Fig.~\ref{fig:random_graphs}b).
To give an intuition, consider a representative situation in which a starting node has $z$ (first) neighbors and one of the neighbors has an excess degree $z$.
Since nodes are connected uniformly at random, the probability of a first neighbor being connected to at least one of the other first neighbors (e.g., dotted line in Fig.~\ref{fig:random_graphs}b) is $1-\left[1-\frac{z-1}{N-2}\right]^{z}$, where $(z-1)/(N-2)$ denotes the probability that one of the other $z-1$ first neighbors is chosen among $N-2$ nodes, excluding the starting node (black node) and the focal first neighbor (gray node). 
This suggests that the chance that a local triangle is formed will be vanishingly small if $z\ll N$, where it will be unlikely that the first neighbors of the starting node are connected with each other. 
Since this argument holds for any starting node, the neighbors of a first neighbor are also unlikely to be connected with each other, and the neighbors of a second neighbor are unlikely to be connected with each other, and so on.\footnote{See \cite{newman2018book2nd} ch.~12 for discussion of the tree-like structure.}  

A more general class of network models, called \emph{configuration models}, in which the degree distribution is prespecified while nodes are connected at random subject to the degree constraint, also exhibit a locally tree-like property~\citep{molloy1995critical,newman2018book2nd}. Clearly, \ER networks and $z$-regular networks are special cases of the configuration models such that the degree distributions are specified by a Poisson distribution and $q_z=1$, respectively.

\subsection{Coordination games with a bilingual option}

Throughout the paper, we consider two types of activities, denoted by $A$ and $B$. Players will benefit from an activity if they enjoy the same activity as their neighbors. The two activities may be complements (e.g., drinking and smoking), substitutes (e.g., committing a crime and taking higher education), or neutral. For each of the two activities, players face a binary problem for which they adopt either action~0 (``don't  do it'') or action~1 (``do it''). 
The strategy set is thus given by $\{00,01,10,11\}\equiv S$. 
Strategy~11 is called the \emph{bilingual option} where the player engages in both activities~\citep{oyama2015bilingual,arigapudi2020bilingual}.
Each player selects a pure strategy $s\in S$, taking all the neighbors' actions as given. 

\subsubsection{Pure strategy equilibria in a bilateral game}

\begin{table}[tbh]
    \centering
    \caption{Payoff matrix of a coordination game with multiple activities. The two activities are complements (resp. substitutes) when $\delta>0$ (resp. $\delta<0$). }
    \begin{tabular}{cccccc}
     \multicolumn{1}{c}{}   &  &  \multicolumn{1}{c}{$00$}    & \multicolumn{1}{c}{$01$} & \multicolumn{1}{c}{$10$}&\multicolumn{1}{c}{$11$}  \\
    \cline{3-6}
    \multicolumn{1}{c}{}&  $00$ &\multicolumn{1}{|c}{$0,0$} & \multicolumn{1}{|c|}{$0,-c$} & \multicolumn{1}{|c|}{$0,-c$} &\multicolumn{1}{|c|}{$0,-2c$}\\
    \cline{3-6}
    &  \multicolumn{1}{c}{$01$} &\multicolumn{1}{|c}{$-c,0$} & \multicolumn{1}{|c|}{$a-c,a-c$} & \multicolumn{1}{|c|}{$-c,-c$} &\multicolumn{1}{|c|}{$a-c,a-2c$}\\
    \cline{3-6}
        \multicolumn{1}{c}{}&  $10$ &\multicolumn{1}{|c}{$-c,0$} & \multicolumn{1}{|c|}{$-c,-c$} & \multicolumn{1}{|c|}{$b-c,b-c$} &\multicolumn{1}{|c|}{$b-c,b-2c$}\\
    \cline{3-6}
     &  \multicolumn{1}{c}{$11$} &\multicolumn{1}{|c}{$\:\:\quad -2c,0\:\: \quad$} & \multicolumn{1}{|c|}{$a-2c,a-c$}&\multicolumn{1}{|c|}{$b-2c,b-c$}&\multicolumn{1}{|c|}{$a\!+\!b+\!\delta\! -\!2c,a\!+b\!+\!\delta\!-\!2c$}\\
    \cline{3-6}
    \end{tabular}
    \label{tab:two-good_payoff}
\end{table}
The payoff matrix of the $4\times 4$ bilateral coordination game is given in Table~\ref{tab:two-good_payoff}.
A player receives payoff $a>0$ (resp. $b>0$) of coordinating with a neighbor if they both join activity $A$ (resp. $B$). The cost of participating in an activity is given by $c$, where $0<c<a$ and $c<b$, so the net payoff of coordinating on activity $A$ (resp. $B$) leads to $a-c>0$ (resp. $b-c>0$). If two players successfully coordinate on both activities, both players receive $a+b$ plus extra payoffs $\delta$, for a total net payoff of $a+b+\delta-2c$.
$\delta$ represents the degree of complementarity between the two activities; $\delta>0$ (resp. $\delta<0$) when the two activities are complementary (resp. substitutes).\footnote{In \cite{oyama2015bilingual}, they consider an extra cost of taking a bilingual option, which needs to be incurred independently of the other player's response. In the current payoff structure, we do not consider such an extra cost, where the cost of taking a bilingual option is given by the sum of the costs for each action. Instead, we introduce an extra payoff, $\delta$, of coordinating on the bilingual option.} 
Note that if $a+b+\delta-2c> a-c$ and $a+b+\delta-2c > b-c$, which are satisfied when $\delta > c-b$ and $\delta > c-a$, then the pure-strategy Nash equilibria are given by $(00,00), (01,01),(10,10)$, and $(11,11)$, and $(11,11)$ is also the Pareto dominant equilibrium.
\begin{proposition}
Suppose that $a>c$, $b>c$, $\delta > c-a$, and $\delta > c-b$. The strategy pair $(11,11)$ is the Pareto dominant Nash equilibrium, and the (pairwise) risk dominant equilibrium is 
\begin{align}
\begin{cases}
 (00,00) & if\;\; c-\delta<a<2c,\; c-\delta<b<2c,\; and\; a+b+\delta <4c, \\
 (01,01) & if\;\;   2c<a,\; a>b,\;  and\;  c<b <2c-\delta, \\
 (10,10) & if\;\;   2c<b,\; b>a,\;  and\;  c<a <2c-\delta, \\
 (11,11) & if\;\;  2c-\delta<a,\;  2c-\delta<b,\;and\; a+b+\delta>4c.
\end{cases}\label{eq:risk_dominance}
\end{align}
\label{prop:rd_high_delta}
\end{proposition}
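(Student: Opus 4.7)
My plan is to proceed in three stages: verify that all four symmetric profiles $(00,00),(01,01),(10,10),(11,11)$ are indeed pure-strategy Nash equilibria under the maintained assumptions, establish Pareto dominance of $(11,11)$, and then derive each case of~\eqref{eq:risk_dominance} from the standard pairwise risk-dominance criterion applied to the $2\times 2$ subgame spanned by every two equilibria.

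For the Nash check I would fix the opponent at each pure strategy $s$ and read off the four own-payoffs from Table~\ref{tab:two-good_payoff}. For $s=00$ the deviations from $00$ yield $-c,-c,-2c$, all below $0$. For $s=01$ the deviations from $01$ yield $0,-c,a-2c$, all strictly below $a-c$ once $a>c$ and $c>0$; the cases $s=10$ and $s=11$ are analogous, the latter using $\delta>c-a$ and $\delta>c-b$ precisely to rule out the deviations $01$ and $10$. Pareto dominance of $(11,11)$ then follows because $a+b+\delta-2c>a-c\iff\delta>c-b$, $a+b+\delta-2c>b-c\iff\delta>c-a$, and $a+b+\delta-2c>0$ since $(a-c)+(b+\delta-c)>0$ by those same two inequalities together with $a,b>c$.

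The bulk of the work is the risk-dominance classification. For any two equilibria $(s,s),(t,t)$ I would restrict the payoff table to $\{s,t\}$ and invoke the Harsanyi--Selten criterion: in a symmetric $2\times 2$ game, $(s,s)$ risk-dominates $(t,t)$ iff $u(s,s)-u(t,s)>u(t,t)-u(s,t)$. Direct substitution from Table~\ref{tab:two-good_payoff} produces the six ``deviation-loss differences'' (the first equilibrium risk-dominates iff the displayed quantity is positive)
\begin{align*}
 00 \text{ vs } 01: &\quad 2c-a, \\
 00 \text{ vs } 10: &\quad 2c-b, \\
 00 \text{ vs } 11: &\quad 4c-a-b-\delta, \\
 01 \text{ vs } 10: &\quad a-b, \\
 01 \text{ vs } 11: &\quad 2c-b-\delta, \\
 10 \text{ vs } 11: &\quad 2c-a-\delta.
\end{align*}
An equilibrium is \emph{pairwise risk-dominant} precisely when it wins all three of its comparisons; collecting the three sign conditions for each of the four equilibria and rewriting $b+\delta<2c$ as $b<2c-\delta$ (and analogously for $a$) reproduces the four cases of~\eqref{eq:risk_dominance} verbatim. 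The lower bounds $c-\delta<a,\,c-\delta<b$ in case~1 and $c<a,\,c<b$ in cases~2--3 are simply the standing hypotheses restated.

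I do not expect a genuine obstacle: the argument is essentially mechanical once the Harsanyi--Selten criterion is applied on each $2\times 2$ slice. The one point I would flag is that case~4 must list all three inequalities $a>2c-\delta$, $b>2c-\delta$, and $a+b+\delta>4c$, because for $\delta>0$ the first two do not imply the third (take $a,b$ just above $2c-\delta$ so that $a+b+\delta$ falls just short of $4c$); thus each of the three pairwise comparisons contributes an independent constraint, and analogous independence holds in the other three cases.
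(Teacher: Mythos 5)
Your proposal is correct and follows essentially the same route as the paper: restrict to the $2\times 2$ subgame spanned by each pair of symmetric equilibria, apply the Harsanyi--Selten pairwise criterion, and intersect the resulting sign conditions with the standing hypotheses to obtain the four cases; your six deviation-loss differences reproduce exactly the paper's conditions \eqref{eq:rd_01}--\eqref{eq:rd_11}. You are merely more explicit where the paper is terse (verifying the Nash property of all four profiles, spelling out the Pareto-dominance inequalities, and noting that the three constraints in case~4 are independent when $\delta>0$), which is fine but not a different argument.
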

\begin{proof}
 It is obvious that the strategy pair $(11,11)$ is Pareto dominant.
 To obtain a risk dominance equilibrium, consider subgames restricted to two strategies. For $(01,01)$ to be risk dominant, for instance, $(01,01)$ has to risk dominate all the other Nash equilibria: $(00,00)$, $(10,10)$ and $(11,11)$~\citep{harsanyi1988general,young1993evolution}. Under the assumptions that $a>c$, $b>c$, $\delta>c-a$ and $\delta>c-b$, we have the following conditions of (strict) risk dominance for each subgame:
 \begin{align}
  & \text{$(01,01)$ risk dominates}   
   \begin{cases}
    (00,00) & \text{if}\; a > 2c,\\
    (10,10) & \text{if}\; a > b,\\
    (11,11) & \text{if}\; b < 2c-\delta,
  \end{cases} \label{eq:rd_01}\\
   &  \text{$(10,10)$ risk dominates}   
   \begin{cases}
    (00,00) & \text{if}\; b > 2c,\\
    (11,11) & \text{if}\; a< 2c-\delta,\\
  \end{cases}\label{eq:rd_10}\\
  & \text{$(11,11)$ risk dominates}\;\;   
    (00,00) \;\; \text{if}\; a+b+\delta > 4c.
 \label{eq:rd_11}
 \end{align}
 The conditions \eqref{eq:rd_01}--\eqref{eq:rd_11}, combined with the four assumptions, define pairwise risk-dominant equilibria depending on the relative sizes of the parameters, as presented in~\eqref{eq:risk_dominance}. 
\end{proof}
When $\delta$ is negative and large in absolute value such that $\delta<c-a$ or $\delta<c-b$, the action pair $(11,11)$ is no longer a Nash equilibrium since joining both activities is not beneficial. In the following analysis, we employ the assumption that $\delta>c-a$ and $\delta>c-b$ to focus on the situation in which the bilingual option can be an equilibrium strategy.
\begin{assumption}
$\delta>c-a$ and $\delta>c-b$.
\end{assumption}


\subsubsection{Multilateral games with $k$ neighbors}

In games with $k$ neighbors, the total payoffs of a player are given by the sum of the payoffs received in the $k$ bilateral games. 
Let $v(s,\bf{m})$ denote the total payoffs of a player who adopts strategy $s\in S$ and faces the neighbors' strategy profile ${\bf{m}}=(m_{00},m_{01},m_{10},m_{11})^\top$, where $m_{s}\in\mathbb{Z}_{\geq 0}$ denotes the number of neighbors that adopt strategy $s\in S$. Note that we have $\sum_{s\in S}m_s = k$ for nodes with degree $k$. For a given $\vecm$, the payoff of each strategy leads to  
\begin{align}
    v({00},{\bf{m}}) &= 0, \label{eq:v00}\\
    v({01},{\bf{m}}) &= -c(k-m_{01}-m_{11}) + (a-c)(m_{01}+m_{11}) \notag \\
                     &= -ck + a(m_{01}+m_{11}),   \label{eq:v01} \\
    v({10},{\bf{m}}) &= -ck + b(m_{10}+m_{11}),   \label{eq:v10}\\
    v({11},{\bf{m}}) &= -2ck + am_{01}+bm_{10} + (a+b+\delta)m_{11}.\label{eq:v11}   
\end{align}
The optimal strategy $s^*$ for a given $\vecm$, is then given by
\begin{align}
    s^*(\vecm) = \underset{s\in S}{\rm arg\,max}\; v(s,\vecm).
    \label{eq:optimal_s_coordination}
\end{align}
Since $v(00,\vecm)=0$, players adopt strategies other than $00$ if the total payoffs of those strategies are positive.\footnote{If there are tie values (i.e., $v(s,\vecm)=v(s',\vecm)$ for $s\neq s'$), we randomly select one strategy. If $k=0$, however, we keep the original strategy to focus on the peer effect.} The conditions for $v({01},\vecm)>0$ and $v({10},\vecm)>0$ are given by the following simple threshold rules:
\begin{align}
   v({01},{\bf{m}}) > 0 \hspace{8pt} {\rm{iff} }\hspace{8pt} \frac{m_{01}+m_{11}}{k} > \frac{c}{a}, \label{eq:v01_threshold} \\ 
   v({10},{\bf{m}}) > 0 \hspace{8pt} {\rm{iff} }\hspace{8pt} \frac{m_{10}+m_{11}}{k} > \frac{c}{b}.\label{eq:v10_threshold}
\end{align}
It should be noted that if there were only one activity, say activity $A$, the players' behavior would be ruled by a \emph{fractional threshold rule}, i.e., $m_{01}/k>c/a$, as in the well-studied models of contagion~\citep{morris2000contagion,Watts2002,Jackson2008book}, in which each player adopts either strategy $00$ or $01$. Since we have a bilingual option here, the optimal strategy is not determined simply by the two conditions \eqref{eq:v01_threshold} and \eqref{eq:v10_threshold}. 

The conditions $v(11,\vecm)>v(01,\vecm)$ and $v(11,\vecm)>v(01,\vecm)$ are respectively rewritten as
\begin{align}
      v({11},{\bf{m}}) > v(01,\vecm) \hspace{8pt} {\rm{iff} }\hspace{8pt} \frac{m_{10}}{k} + \left(1+\frac{\delta}{b}\right)\frac{m_{11}}{k}> \frac{c}{b}, \label{eq:v01_v11} \\
           v({11},{\bf{m}}) > v(10,\vecm) \hspace{8pt} {\rm{iff} }\hspace{8pt} \frac{m_{01}}{k} + \left(1+\frac{\delta}{a}\right)\frac{m_{11}}{k}> \frac{c}{a}.
     \label{eq:v10_v11}
\end{align}
Suppose for the moment that $\delta> 0$. Conditions~\eqref{eq:v10_threshold} and \eqref{eq:v01_v11} indicate that the fraction of neighbors joining activity~$B$ needed for strategy~$11$ to be preferable to strategy~$01$ will be less than that required for strategy~$10$ to be preferable to strategy~$00$. This is because when $\delta>0$, the payoff of engaging in both activities is greater than the sum of the payoffs of each activity.  
However, if the two activities are substitutes and thereby $\delta <0$, the benefit of joining an additional activity is diminished, so the condition for joining activity~$B$ \emph{in addition to} $A$ will be more stringent than condition~\eqref{eq:v10_threshold}, which is the condition for deciding whether to participate in $B$ or do nothing. 
The same argument also holds for the relationship between $v(11,\vecm)$ and $v(10,\vecm)$ (Eq.~\ref{eq:v10_v11}).

A crucial difference from the bilateral coordination games is that in this multilateral environment, coordinating with neighbors may not necessarily be the best response. For example, suppose that $m_{11}=0$, while $m_{01}$ and $m_{10}$ are sufficiently large such that Eqs.~\eqref{eq:v01_threshold} and \eqref{eq:v10_threshold} are satisfied (e.g., most friends use Windows or Mac OS, but none of them use both). Since $m_{11}=0$, conditions~\eqref{eq:v01_v11} and \eqref{eq:v10_v11} reduce to $m_{10}/k>c/b$ and $m_{01}/k>c/a$, respectively, and these conditions are also satisfied since Eqs.~\eqref{eq:v01_threshold} and \eqref{eq:v10_threshold} hold for $m_{11}=0$. Therefore, the best strategy turns out to be $s^*=11$ (e.g., using both Windows and Mac), regardless of the fact that no neighbor adopts $s=11$.

 It is also straightforward to show that 
 \begin{align}
      v(01,\vecm)>v(10,\vecm)\hspace{8pt} {\rm{iff} }\hspace{8pt}
      am_{01}-bm_{10}+(a-b)m_{11}>0,
      \label{eq:v01_10}
 \end{align}
 which suggests that a player's strategy may switch from $01$ to $10$ or vice versa in the process of diffusion, depending on the relative size of $m_{01}$ and $m_{10}$. The diffusion process is thus generally non-monotonic, unlike the standard (irreversible) binary-state cascade models~\citep{morris2000contagion,Watts2002,Jackson2008book,unicomb2021dynamics}.

\subsubsection{$p$-dominance and its relation to the threshold conditions}\label{sec:p-dominance}

In the previous subsection, we obtained the fractional threshold conditions for a strategy to be the best response. Here, we argue that a commonly used equilibrium selection criterion, \emph{$p$-dominance}~\citep{morris1995p-dominance,kajii1997robustness}, can be interpreted as a sufficient condition for the corresponding threshold condition.

\begin{definition}[\citealt{morris1995p-dominance}, \citealt{kajii1997robustness}]
Let $\tilde{u}(s_i,s_j)$ denote the $(i,j)$th element of the payoff matrix, and let $\pi$ on $S$ be a probability distribution where $\sum_{s\in S}\pi(s)=1$. The strategy pair $(s_i,s_i)$ is $p$-dominant if for every probability distribution $\pi$ on $S$ such that $\pi(s_i)\geq p$, and for every $s_{\ell}\in S$,
\begin{align}
    \sum_{s_j\in S}\pi(s_j)\tilde{u}(s_i,s_j) \geq \sum_{s_j\in S}\pi(s_j)\tilde{u}(s_\ell,s_j).
\end{align}
\label{def:p-dominance}
\end{definition}

A Nash equilibrium $(s_i,s_i)$ is said to be $p$-dominant if $s_i$ is a best strategy as long as the neighbors adopt strategy $s_i$ with a probability greater than or equal to $p$. While $p$-dominance is originally an equilibrium characterization for games with incomplete information, this concept is closely related to the threshold conditions shown in the previous section. 
From Eqs.~\eqref{eq:v01_threshold}, \eqref{eq:v01_v11} and \eqref{eq:v01_10}, one can show that $s=01$ will be the best response if the fraction of $01$-neighbors, $m_{01}/k$, satisfies all of the following three conditions: 
\begin{align}
    \frac{m_{01}}{k}>& -\frac{m_{11}}{k} + \frac{c}{a}, \label{eq:p-dom_01_00}\\
    \frac{m_{01}}{k}>& -\frac{a}{a+b}\frac{m_{11}}{k} + \frac{b}{a+b},\label{eq:p-dom_01_10}\\
    \frac{m_{01}}{k}> & 
    \begin{cases}
\frac{\delta}{b}\frac{m_{11}}{k}+1-\frac{c}{b},& \text{ if }\delta<0,\\
-\frac{\delta}{b+\delta}\frac{m_{10}}{k}+1-\frac{c}{b+\delta}, &\text{ if }\delta\geq0,
    \end{cases}\label{eq:p-dom_01_11}
\end{align}
where we used the inequality $m_{01}+m_{10}+m_{11}\leq k$. Now, consider more stringent conditions for which the inequalities \eqref{eq:p-dom_01_00}--\eqref{eq:p-dom_01_11} will hold for any configuration of $(m_{00}/k,m_{10}/k,m_{11}/k)$. Such conditions are given by
\begin{align}
 \frac{m_{01}}{k}> \frac{c}{a},\;\;\; 
    \frac{m_{01}}{k}>  \frac{b}{a+b},\;\;\;
    \frac{m_{01}}{k}>  
    \begin{cases}
1-\frac{c}{b}& \text{ if }\delta<0,\\
1-\frac{c}{b+\delta} &\text{ if }\delta\geq0.
    \end{cases}\label{eq:p-dom2}
\end{align}
The combined condition of \eqref{eq:p-dom2} is thus a sufficient condition for $s=01$ to be the best response. 
Indeed, the strategy pair $(s,s)$ is $p$-dominant for any $p\in[p_{s},1]$ and $s\in S$ such that
\begin{align}
    p_{00}&=\max\left\{\frac{a-c}{a},\frac{b-c}{b},1-\frac{2c}{a+b+\delta}\right\},\label{eq:p_00}\\
    p_{01}&=\max\left\{\frac{c}{a},\frac{b}{a+b},1-\frac{c}{\max\{b,b+\delta\}}\right\},\label{eq:p_01}\\
    p_{10}&=\max\left\{\frac{c}{b},\frac{a}{a+b},1-\frac{c}{\max\{a,a+\delta\}}\right\},\label{eq:p_10}\\
    p_{11}&=\max\left\{\frac{2c}{a+b+\delta},\frac{c}{b+\delta},\frac{c}{a+\delta}\right\}\label{eq:p_11}.
\end{align}
Note that $p_{01}$ is equivalent to the maximum threshold in~\eqref{eq:p-dom2}. This implies that if the fraction $m_{01}/k$ is interpreted as the probability of a randomly selected neighbor adopting $s=01$ (i.e., $\pi(01)$), then $p_{01}$ would coincide with the relevant threshold value in \eqref{eq:p-dom2}. This argument obviously holds true for the other strategies as well. Fig.~\ref{fig:p-dominance} shows the values of $p_{s}$ for different parameter combinations.

\begin{figure}
    \centering
    \includegraphics[width=15.8cm]{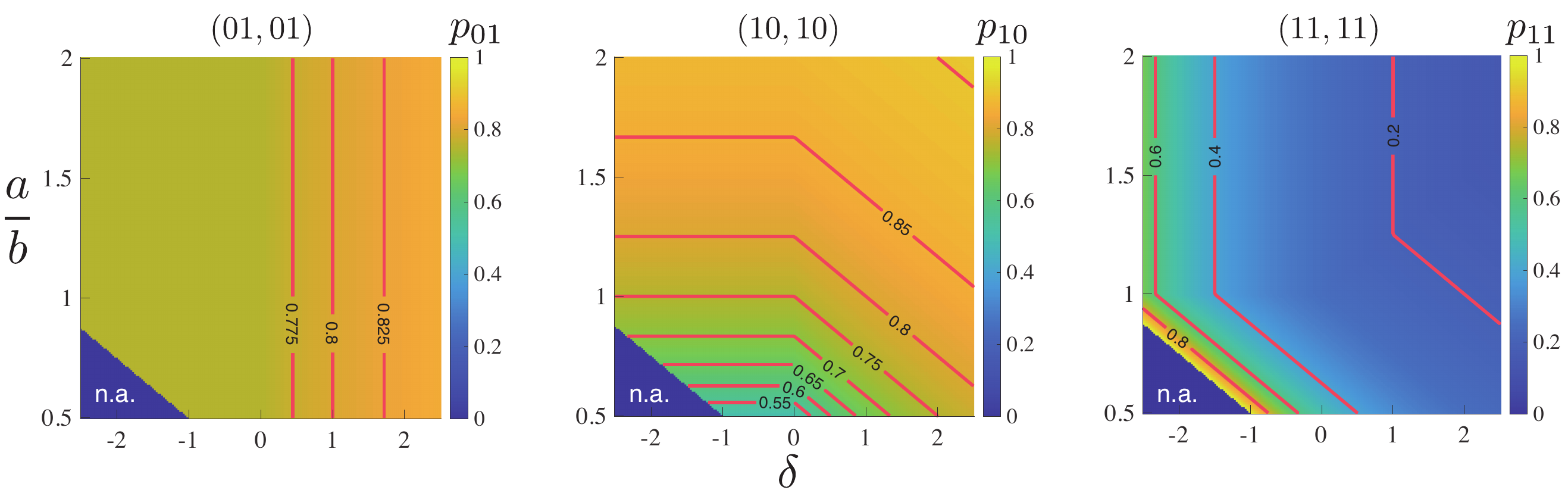}
    \caption{Minimum values of $p$ for a strategy pair to be $p$-dominant. We set $b=4$ and $c=1$, focusing on the parameter space in which $\delta>c-a$ and $\delta>c-b$.}
    \label{fig:p-dominance}
\end{figure}

In section~\ref{sec:quantitative}, we calculate the dynamical paths for the aggregate share of each strategy. There, we will examine whether a strategy $s$ would increase its popularity once its share exceeds $p_s$ indicated in Fig.~\ref{fig:p-dominance}. If the probability distribution of neighbors' strategies is well approximated by the aggregate shares of each strategy in the population, then $p$-dominance would indeed become a sufficient condition for a strategy to be adopted by the majority of players in the steady state.

\section{Analytical framework for describing the dynamics}

Now we present analytical approaches to describing the dynamics of diffusion. 
Throughout the analysis, we assume that strategy~$00$ is the status quo for all players, except for a certain fraction of players who are initially ``active.'' Then, the neighbors of the initially active players (or the ``seed players'') may respond by adopting optimal strategies other than $00$, which may initiate cascades through edges in the network.

Since the strategic choice of each player can be expressed as a function of the neighbors' profile $\vecm$, we introduce a probability $F_{\vecm}(s\to s')$ in which the strategy changes from $s$ to $s'$:\footnote{Players' choice also depends on degree $k$ in the coordination games, but the degree is directly obtained by $\vecm$ since $k = \sum_{s\in S}m_s$.}
\begin{align}
    F_{\vecm}(s\to s') =
    \begin{cases}
    1  \;\;{\text{ if }}\; s' = s^*(\vecm),  \\ 
    0 \;\;{\text{ otherwise}}.
    \end{cases}
    \label{eq:response_func}
\end{align}
 Note that the choice of new strategy does not depend on the player's current strategy $s$. This indicates that, unlike the standard binary-state models~\citep{morris2000contagion,Watts2002}, the strategic choice is fully reversible and is determined solely in response to the neighbors' profile $\vecm$. We also call $F_{\vecm}$ the \emph{response function}. 

\subsection{Approximate master equations}

Let $\mathcal{V}_{|\vecm|=k}^s(t)$ denote the set of $k$-degree players that belong to the $(s,\vecm)$ class at time $t$, i.e., all the players in $\mathcal{V}_{|\vecm|=k}^s$ adopt strategy $s\in S$ and have $k$ neighbors. The fraction of $k$-degree players belonging to the $(s,\vecm)$ class is given by 
\begin{align}
    \rho_{k,\vecm}^s(t) \equiv \frac{|\mathcal{V}_{|\vecm|=k}^s(t)|}{q_k N},
\end{align}
where $q_k N$ is the total number of players with degree $k$. Then, the expected fraction of players adopting strategy $s\in S$ (i.e., $s$-players) at time $t$ is given as
\begin{align}
    \rho^s(t) = \sum_k q_k\sum_{|\vecm|=k} \rho_{k,\vecm}^s(t),
    \label{eq:rho_s}
\end{align}
where $\sum_{|\vecm|=k}$ denotes the sum over $\vecm =$ $(m_{00},m_{01},m_{10},m_{11})$ such that $\sum_{s\in S} m_s = k$. Our interest in this section is to calculate the dynamical path of $\rho^s$ for each strategy $s\in S$. To this end, we need to describe the dynamics of $\rho_{k,\vecm}^s$, which capture changes in the population in a given $(s,\vecm)$ class. 

There are four factors that change $\rho_{k,\vecm}^s$ over time.  Players will \emph{leave} the $(s,\vecm)$ class if \emph{i}) their strategy changes from $s$ to $s'(\neq s)$, or \emph{ii}) their neighbor profile changes from $\vecm$ to $\vecm' (\neq \vecm)$. On the other hand, players will newly \emph{enter} the $(s,\vecm)$ class if \emph{iii}) the players' strategies shift from $s'(\neq s)$ to $s$, or \emph{iv}) the neighbor profiles shift from $\vecm'(\neq \vecm)$ to $\vecm$.
To take into account all of these four factors that will affect the behavior of $\rho_{k,\vecm}^s$, we employ an approximate master equations (AMEs) approach~\citep{gleeson2011high,gleeson2013binary,fennell2019multistate}. The dynamics of $\rho_{k,\vecm}^s$ is given by the following differential equation:
\begin{align}
     \frac{d}{dt} \rho_{k,\vecm}^s \;=\; & -\overbrace{\sum_{s' \neq s}F_{\vecm}(s\to s')\rho_{k,\vecm}^s}^{\text{i) Leave, $s\!\to$}} 
     \;\: -\overbrace{\sum_{r\in S}\sum_{r'\neq r}m_{r}\phi_{s}(r\to r')\rho_{k,\vecm}^s}^{\text{ii) Leave, $\vecm\!\to$}} \notag \\
     & +\;\: \underbrace{\sum_{s' \neq s}F_{\vecm}(s'\to s)\rho_{k,\vecm}^{s'}}_{\text{iii) Enter, $\to\! s$}} 
    \;\: +\;\: \underbrace{\sum_{r\in S}\sum_{r'\neq r}(m_{r'}+1)\phi_{s}(r'\to r)\rho_{k,\vecm-{\bf e}_{r} + {\bf e}_{r'}}^s}_{\text{iv) Enter, $\to\!\vecm$}},
\label{eq:AME_eq}
\end{align}
for all $s\in S$ and $\vecm$ such that $\sum_{s\in S}m_s=k\in \{1,\ldots,k_{\rm max}\}$. We assume that $\frac{d}{dt}\rho_{k,\vecm}^s=0$ for $k=0$, meaning that isolated players will not be influenced by other players. $\phi_s(r\to r')$ denotes the probability that a neighbor of an $s$-player changes the strategy from $r$ to $r'$ for $r,r'\in S$:
\begin{align}
    \phi_s(r\to r') = \frac{\sum_k q_k \sum_{|\vecm|=k} m_s\rho_{k,\vecm}^rF_\vecm(r\to r')}{\sum_k q_k \sum_{|\vecm|=k} m_s\rho_{k,\vecm}^r},
    \label{eq:phi_s}
\end{align}
where the denominator $\sum_k q_k \sum_{|\vecm|=k} m_s\rho_{k,\vecm}^r$ represents the expected number of $(s)$--$(r)$ edges that connect $s$-players and $r$-players at a given point in time. The expected number of $(s)$--$(r)$ edges that change to $(s)$--$(r')$ in a small time interval $dt$ is then given as $\sum_k q_k \sum_{|\vecm|=k} m_s\rho_{k,\vecm}^rF_\vecm(r\to r')dt$. The probability of a  $(s)$--$(r)$ edge being changed to a $(s)$--$(r')$ edge in the interval $dt$, denoted by $\phi_s(r\to r')dt$, is calculated as the ratio of the expected number of edges that changes from $(s)$--$(r)$ to $(s)$--$(r')$ and the expected number of $(s)$--$(r)$ edges, which leads to Eq.~\eqref{eq:phi_s}.
 
The first and second terms in Eq.~\eqref{eq:AME_eq} respectively capture the aforementioned factors i) and ii). The first term captures the rate at which the strategy of a player in the $(s,\vecm)$ class shifts from $s$ to $s'(\neq s)$ in an infinitesimal time interval $dt$. Similarly, the second term denotes the rate at which the neighbors' profile differs from $\vecm$.
The third and fourth terms correspond to the factors iii) and iv), respectively. The third term captures the rate at which the strategy of the $(s',\vecm)$-class players changes from $s'(\neq s)$ to $s$.
The fourth terms shows the rate that the neighbors' profile newly becomes $\vecm$.
${\bf e}_r$ denotes the $4\times 1$ vector with $1$ in the $r$-th element and $0$ in the other elements.
The expression $\vecm-{\bf e}_r+{\bf e}_{r'}$ thus represents the neighbor profile that has $m_{r'}+1$ in the $r'$-th element and $m_{r}-1$ in the $r$-th element.
It should be noted that Eq.~\eqref{eq:AME_eq} describes the dynamics under asynchronous updates in which only a fraction $dt$ of players can change their strategies in response to their neighbor profiles in a small time interval. 
Therefore, while the optimal strategy is given as a function of $\vecm$ (Eq.~\ref{eq:optimal_s_coordination}), the current strategy of a player does not necessarily have a one-to-one correspondence with $\vecm$ in the process of diffusion. $s$ and $\vecm$ recover the one-to-one correspondence defined by Eq.~\eqref{eq:optimal_s_coordination} in the steady state at which players have no incentive to change their strategies, i.e., a Nash equilibrium.

The system of differential equations can be solved by providing initial values $\rho_{k,\vecm}^s(0)$ for all $s\in S$ and $\vecm\in\{\vecm: |\vecm|=k,\:k= 0,\ldots,k_{\rm max}\}$. 
The number of differential equations in the system is calculated as the total number of ways of picking $k$ balls in an urn with replacement and without ordering. In the urn there are balls of four different colors, so the total number of color patterns when one picks $k$ balls is given by $\binom{4+k-1}{k} = \binom{k+3}{k}$. Since a player selects one of the four strategies and the degree $k$ ranges from $0$ to $k_{\rm max}$, the total number of differential equations leads to $4\sum_{k=0}^{k_{\rm max}}\binom{k+3}{k}$.\footnote{We solve the system of differential equations using an ODE solver, \texttt{ode45}, for Matlab. Our Matlab code is based on the multi-state-SOLVER package available from \texttt{https://github.com/peterfennell/multi-state-SOLVER}.}  

\subsection{Mean-field approximation}

In the AME approach, it is assumed that the transition rate of a neighbor's strategies, $\phi_s(r\to r')$, is independent of the states of the other neighbors, and their profile $\vecm$ is used to characterize each player's state. In the mean-field (MF) approach, we impose a stronger assumption that the strategies of neighbors are independent and randomly distributed following a multinomial distribution. Thus, the neighbor profile $\vecm$ is ignored and each player's state is characterized by a combination of strategy and degree, $(s,k)$, rather than $(s,\vecm)$.  

 To calculate the average fraction of players belonging to the $(s,k)$ class, we define $\rho_{k}^s$ as the sum of $\rho_{k,\vecm}^s$ over $\vecm$:
 \begin{align}
     \rho_{k}^s(t) \equiv \sum_{|\vecm|=k}\rho_{k,\vecm}^s(t).
     \label{eq:rho_sk}
 \end{align}
The average probability that a neighbor of a player adopts strategy $r\in S$, denoted by $\omega^r$, is given by
\begin{align}
    \omega^r(t) \equiv \sum_{k\geq 1}\frac{k q_k}{z}\rho_{k}^r(t),
    \label{eq:omega_MF}
\end{align}
where $kq_k/z$ is the probability that a randomly selected neighbor has degree $k$. The dynamics of $\rho_k^s$ are then expressed as
\begin{align}
    \frac{d}{dt}\rho_{k}^s = -\sum_{s'\neq s}\rho_k^s\sum_{|\vecm|=k}\mathcal{M}_{k}(\vecm,\vecw)F_\vecm(s\to s') 
     +\sum_{s'\neq s}\rho_k^{s'}\sum_{|\vecm|=k}\mathcal{M}_{k}(\vecm,\vecw)F_\vecm(s'\to s),
     \label{eq:MF_eq}
\end{align}
where $\vecw \equiv (\omega^{00},\omega^{01},\omega^{10},\omega^{11})^\top$, and $\mathcal{M}_k(\vecm,\vecw)$ is the multinomial distribution:
\begin{align}
    \mathcal{M}_k(\vecm,\vecw) \equiv \frac{k!}{m_{00}!m_{01}!m_{10}!m_{11}!}(\omega^{00})^{m_{00}}(\omega^{01})^{m_{01}}(\omega^{10})^{m_{10}}(\omega^{11})^{m_{11}}.
\end{align}
The first term in Eq.~\eqref{eq:MF_eq} captures the rate at which a player leaves the $(s,k)$ class by changing the strategy from $s$ to $s'(\neq s)$. The second term denotes the rate at which $(s',k)$-class players newly employ strategy $s$. Note that since we have four strategies and the degree ranges from $0$ to $k_{\rm max}$, there are $4(k_{\rm max}+1)$ differential equations in the MF scheme, where $\frac{d}{dt}\rho_0^s = 0$. 

\subsection{Dominant strategies in Nash equilibria}

\begin{figure}[tb]
    \centering
    \includegraphics[width=15.7cm]{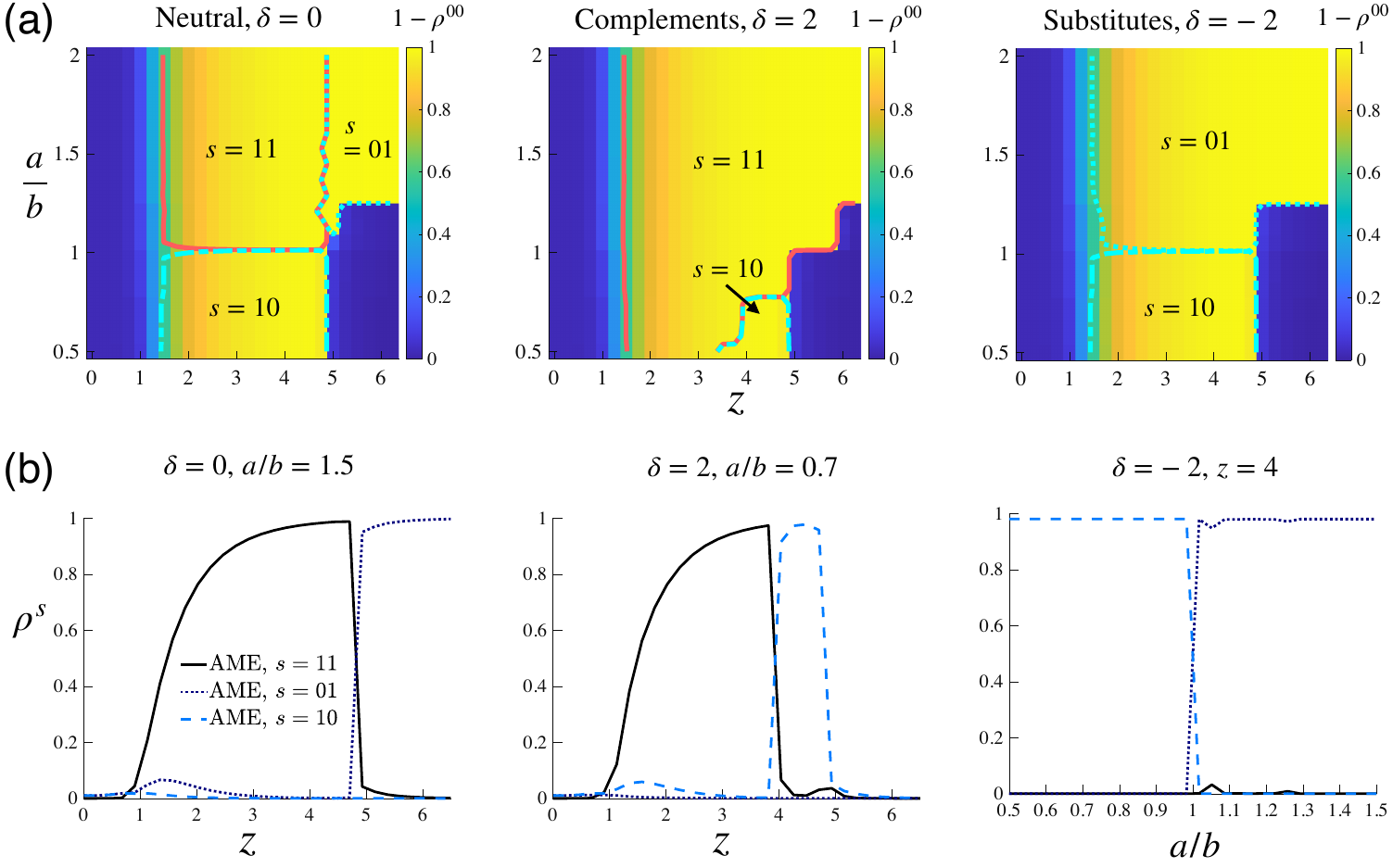}
    \caption{Dominant strategy in Nash equilibrium. (a) Dominant strategy, which is a strategy employed by at least 50\% of players, can be either $01$ (light-blue dotted), $10$ (light-blue dashed), or $11$ (red solid). Color denotes $1-\rho^{00}(T)$ calculated by the AME.  We use \ER networks with mean degree $z$. (b) Fraction of players adopting strategy~$s$ in the steady state. In both (a) and (b), we set $b=4$, $c=1$, $\rho_0^{01}=\rho_0^{10}=0.01$, $\rho_0^{11}=0$, and $T=50$. }
    \label{fig:heatmap}
\end{figure}

Let us examine how the steady state of diffusion processes is affected by the payoff parameters $(a,b,c)$, substitutability parameter $\delta$, and network connectivity $z$. Fig.~\ref{fig:heatmap}a shows the cascade region calculated using the AME. 
We find that within the cascade region in which $1-\rho^{00}> 0$, there are some distinct subregions characterized by different dominant strategies, where we define a dominant strategy as the strategy employed by more than 50\% of players in the Nash equilibrium (Fig.~\ref{fig:heatmap}a).
For a given $\delta$, the dominant strategy varies with the relative attractiveness, $a/b$, and the average degree, $z$. When $\delta=0$, for instance, strategy~$01$ (resp. strategy~$10$) prevails only when the payoff value $a$ (resp. $b$) is relatively large. 
Which strategy dominates the others also depends  strongly on the degree of substitutability $\delta$. Intuitively, strategy~$11$ will be dominant in most of the cascade region when both activities are complements (Fig.~\ref{fig:heatmap}a, \emph{middle}), whereas either strategy~$01$ or $10$ will be adopted when the two activities exhibit strong substitutability (Fig.~\ref{fig:heatmap}a, \emph{right}). 

It should be noted that shifts in dominant strategies, or \emph{phase transitions}, occur in a discontinuous manner at the boundary of the dominant regions. A small change in a parameter may cause the current dominant strategy to be discarded in favor of another. (Fig.~\ref{fig:heatmap}b). In the class of binary-state threshold models, the possibility of global cascades can arise only when the network connectivity (i.e., the average degree $z$) is neither too weak nor too strong, and the boundaries of the average degree at which phase transitions occur are called \emph{critical points}~\citep{Watts2002,GaiKapadia2010}.
In our ``multistate'' cascade model, there are possibly six types of boundaries at which dominant strategies switch: $(00,01)$, $(00,10)$, $(00,11)$, $(01,10)$, $(01,11)$, and $(10,11)$. The critical points for these transition patterns are characterized not only by the network connectivity $z$, but also by the relative attractiveness $a/b$ and the degree of substitutability $\delta$.

\subsection{(In)stability of diffusion dynamics and symmetry breaking}\label{sec:symmetry_breaking_coordination}

To describe the mechanics of diffusion dynamics using phase diagrams, here we employ the MF approach~\eqref{eq:MF_eq}, assuming that networks are $z$-regular random graphs. Note that when the network is $z$-regular, we have $\rho_k^s(t) = \rho^s(t)=\omega^s(t)$ from Eqs.~\eqref{eq:rho_s}, \eqref{eq:rho_sk} and \eqref{eq:omega_MF}, since $q_k = 1$ for $k=z$ and $q_k=0$ otherwise.
Thus, the system of MF equations consists of four equations for the four variables: $\rho^{00}(t)$, $\rho^{01}(t)$, $\rho^{10}(t)$ and $\rho^{11}(t)$. Note that since there is a constraint that $\sum_{s\in S} \rho^s(t)=1$, we can focus on three of them by replacing $\rho^{00}$ ($=\omega^{00}$) with $1-\rho^{01}-\rho^{10}-\rho^{11}$.
In the following analysis, we mainly describe the dynamics of $\rho^{01}$, $\rho^{10}$ and $\rho^{11}$, where $\rho^{00}$ is determined residually.\footnote{We drop the time subscript $t$ for brevity.} 

\begin{figure}[tb]
    \centering
    \includegraphics[width=15.5cm]{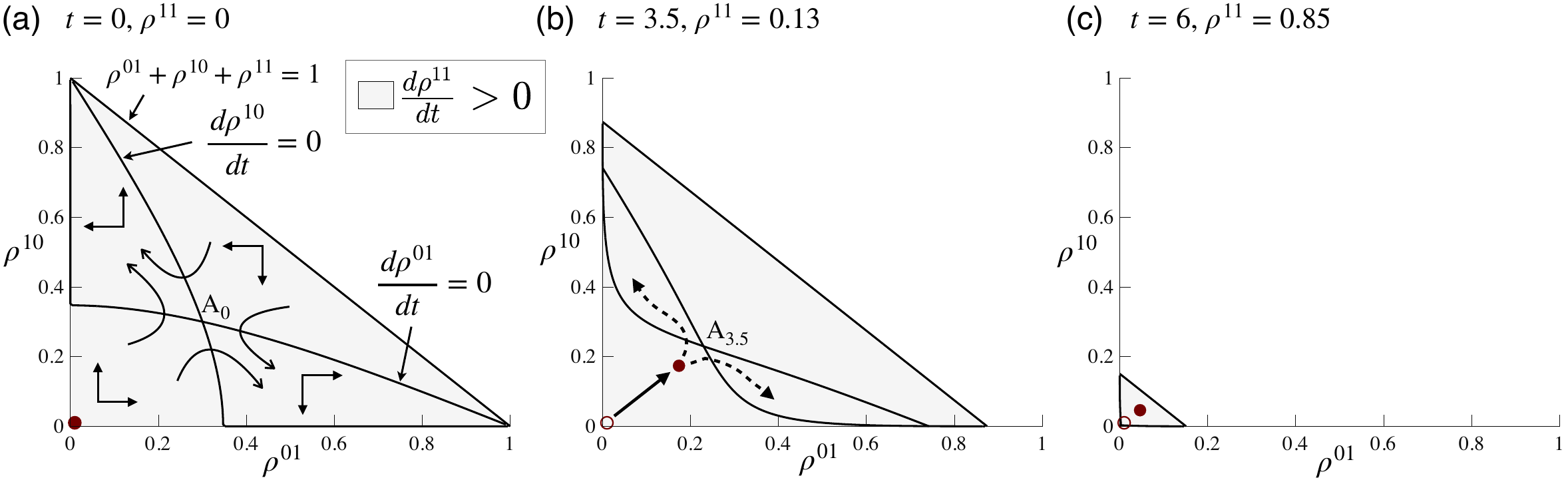}
    \caption{Phase diagram for the diffusion of neutral activities ($\delta=0$) obtained by the MF method using random 4-regular networks. Note that we have $d\rho^{11}/dt>0$ independently of $\rho^{01}$ and $\rho^{10}$, so that $\lim_{t\to\infty}\rho^{01}(t)=\lim_{t\to\infty}\rho^{10}(t)=0$. Temporal symmetry breaking may occur since ${\rm A}_t$ is a saddle point. $a=b=4$, $c=1$, $\rho_0^{01}=\rho_0^{10}=0.01$ and $\rho_0^{11}=0$.}
    \label{fig:symmetry_breaking_delta0}
\end{figure}

Fig.~\ref{fig:symmetry_breaking_delta0} shows the (sliced) phase diagrams for the diffusion of neutral goods (i.e., $\delta=0$) in the $\rho^{01}$--$\rho^{10}$ plane. Note that the combination of $(\rho^{01},\rho^{10})$ must lie within the feasible region $\{(\rho^{01},\rho^{10}):  \rho^{01}+\rho^{10}+\rho^{11}\leq 1,\: \rho^{s}\in [0,1] \}$ (shaded in gray). The feasible region is divided into four subregions by the signs of the time derivatives $d\rho^{01}/dt$ and $d\rho^{10}/dt$ (denoted by arrows), which are calculated using Eq.~\eqref{eq:MF_eq} for a given $(\rho^{01},\rho^{10},\rho^{11})$. Suppose that the initial state is given by $(\rho^{01}(0),\rho^{10}(0),\rho^{11}(0))=(0.01,0.01,0)$, which is denoted by red circle in Fig.~\ref{fig:symmetry_breaking_delta0}a. Since the time derivatives of $\rho^{01}$ and $\rho^{10}$ are both positive at
the initial point, the combination $(\rho^{01},\rho^{10})$ moves toward the saddle point ${\rm A}_0$. Along with this, since $d\rho^{11}/dt>0$, the feasible region shrinks as $\rho^{11}$ increases with time.  The closer the combination $(\rho^{01},\rho^{10})$ is to the saddle point, the more likely it is that the symmetry between $\rho^{01}$ and $\rho^{10}$ will be broken (Fig.~\ref{fig:symmetry_breaking_delta0}b). 
Note that while the theoretical path of $(\rho^{01},\rho^{10})$ suggested by the MF equations approaches the saddle point, the actual (or simulated) path could deviate from the theoretical path because seed players are not necessarily located in symmetric positions. The extent to which a seed player affects the other players' behavior would depend not only on the number of their direct neighbors, but also on the number of neighbors at two or more steps away. Thus, there is no guarantee that the realized fraction of $s$-players is equal to the theoretical average $\rho^{s}$, as there are some fluctuations in the network structure that could not be captured by the current ``average-based'' approximation methods.
If the realized path in a given network deviates from the MF path at least to some extent, then the subsequent path will move further away from the saddle point. 
However, because $\rho^{11}$ is increasing, the feasible region continues to shrink, resulting in $\lim_{t\to\infty}\rho^{01}(t) = \lim_{t\to\infty}\rho^{01}(t)= 0$ and $\lim_{t\to\infty}\rho^{11}(t)=1$ (Fig.~\ref{fig:symmetry_breaking_delta0}c). This indicates that the observed symmetry breaking, if any, is transient, and strategy~$11$ will always be dominant in the steady state.

\begin{figure}[tb]
    \centering
    \includegraphics[width=15.5cm]{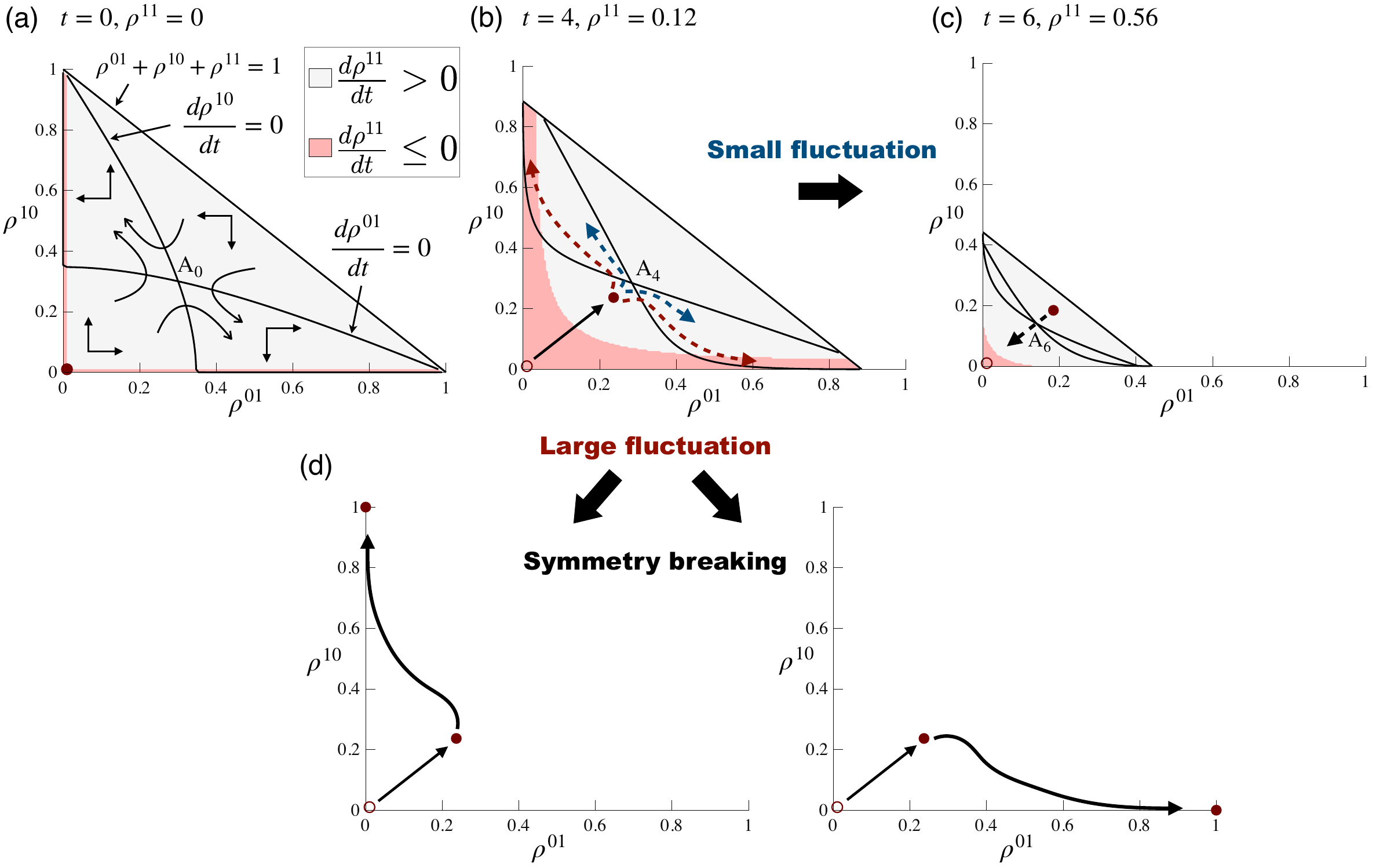}
    \caption{Phase diagram for the diffusion of substitutes ($\delta=-2$) obtained by the MF method using random 4-regular networks.  If the combination of $(\rho^{01},\rho^{10})$ reaches the pale-red region in which $d\rho^{11}/dt<0$, the feasible region of $(\rho^{01},\rho^{10})$ expands, and the share of strategies approaches a stable state: $(\rho^{00},\rho^{01},\rho^{10},\rho^{11}) = (0,1,0,0)$ or $(0,0,1,0)$. $a=b=4$, $c=1$, $\rho_0^{01}=\rho_0^{10}=0.01$ and $\rho_0^{11}=0$.}
    \label{fig:symmetry_breaking}
\end{figure}

Fig.~\ref{fig:symmetry_breaking} illustrates the phase diagrams for substitutes ($\delta=-2$). We see that there is still a saddle point as seen in Fig.~\ref{fig:symmetry_breaking_delta0}, but there arises a region in which $d\rho^{11}/dt<0$ (shaded in pale red). In the diffusion of neutral activities, the feasible region always shrinks and thus any deviation from the theoretical path will be diminished, leading to the unique equilibrium $(\rho^{01},\rho^{10})=(0,0)$. 
In the diffusion of substitutes, the same mechanics will still hold if the deviation from the MF path is not sufficiently large (Fig.~\ref{fig:symmetry_breaking}c). However, the feasible region will expand if the deviation from the theoretical path is sufficiently large such that $d\rho^{11}/dt<0$ (Fig.~\ref{fig:symmetry_breaking}b). If this is the case, the observed symmetry breaking will no longer be a transient phenomenon, where the equilibrium for $(\rho^{01},\rho^{10})$ will be given by $(1,0)$ or $(0,1)$ (Fig.~\ref{fig:symmetry_breaking}d). In the following section, we will show that such a persistent symmetry breaking can indeed occur not only in $z$-regular random networks, but also in more complex \ER networks.

\section{Quantitative analysis}\label{sec:quantitative}

To see how well the differential equations obtained by the AME (Eq.~\ref{eq:AME_eq}) and MF (Eq.~\ref{eq:MF_eq}) approaches capture the dynamics of diffusion, we compare the analytical results with simulated diffusion processes. We use \ER networks in the baseline analyses. The procedure of numerical simulation is as follows:
\begin{enumerate}
     \item For a given $z$ and $N$, generate an \ER network with connecting probability $q=z/(N-1)$. 
     \item Select seed players at random so that there are $\lfloor \rho_0^{01} N\rfloor$ players adopting strategy~$01$ and $\lfloor \rho_0^{10} N\rfloor$ players adopting strategy~$10$. The other players employ strategy~$00$ as the status quo.
     \item Choose a fraction $dt\in (0,1)$ of players uniformly at random and update their strategies to maximize their payoff $v$.
     \label{it:async}
     \item Repeat step 3 until convergence, where further updates would not change the strategy of any player.  
     \item Repeat steps 1--4.
 \end{enumerate}
To conduct simulations in a manner consistent with the continuous-time framework, we implement an asynchronous update in step~\ref{it:async}, where a randomly chosen fraction $dt$ of the players updates their strategies in an infinitesimally small interval $dt$~\citep{gleeson2011high,gleeson2013binary}. This is consistent with the AME and MF formulations because changes in the fraction $d\rho^s_{k,\bf{m}}$ in Eq.~\eqref{eq:AME_eq} (resp. $d\rho^s_{k}$ in Eq.~\ref{eq:MF_eq}) are explained by a fraction $dt$ of the possible shifts in  the players' states captured by the RHS of Eq.~\eqref{eq:AME_eq} (resp. Eq.~\ref{eq:MF_eq}). We set $dt=0.01$ in all simulations.

\subsection{Dynamics of diffusion: symmetric payoffs}

\begin{figure}[tb]
    \centering
    \includegraphics[width=15.8cm]{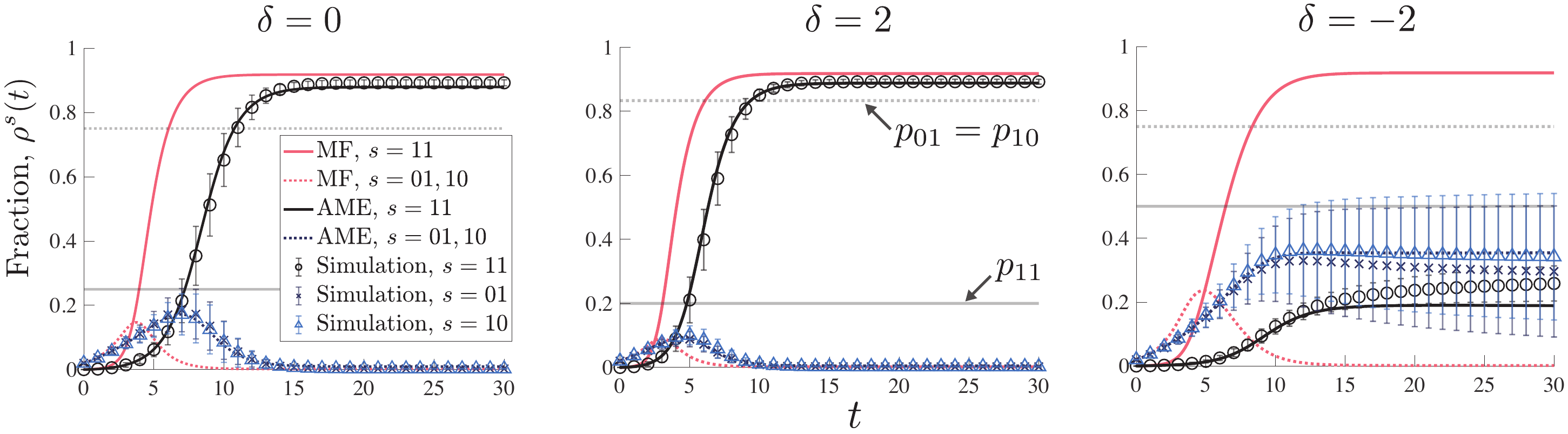}
    \caption{Theoretical and simulation results for the path of $\rho^{s}(t)$. Symbols denote the averages over 100 runs on \ER networks with $z=2.5$ and $N=3000$. Error bar denotes one standard deviation. Horizontal dotted and solid gray lines respectively denote the values of $p$ for $(01,01)$ (or equivalently $(10,10)$) and $(11,11)$ to be $p$-dominant. 
    $a=b=4$, $c=1$, $\rho_0^{01}=\rho_0^{10}=0.02$ and $\rho_0^{11}=0$.}
    \label{fig:path_vs_deltas}
\end{figure}

Let us first consider the case of symmetric activities, where $a=b$ and $\rho^{01}(0) = \rho^{10}(0)$. Since $a=b$, the two activities $A$ and $B$ are equally attractive, so it is expected that $\rho^{01}(t) = \rho^{10}(t)$ for all $t$, other things being equal.  
We find that when the two activities are neutral ($\delta=0$) or complements ($\delta>0$),  the path of $\rho^s(t)$ obtained by the AME method well matches the simulated path for any strategy~$s$, while the MF approximation is generally less accurate (Fig.~\ref{fig:path_vs_deltas}, \emph{left} and \emph{middle}). The standard deviations of the simulated $\rho^s(t)$ over 100 runs become vanishingly small as $t\to\infty$.

In contrast, when the two activities are substitutes ($\delta<0$), the standard deviations of the simulated $\rho^s(t)$ increase over time, while the average values are still well approximated by the corresponding theoretical values obtained by the AME (Fig.~\ref{fig:path_vs_deltas}, \emph{right}).
The discrepancy between theory and simulation observed in the propagation of substitutes reflects the fact that one of the two activities occasionally dominates the other even if the payoff parameters and the seed fractions are perfectly symmetric. This symmetry breaking occurs through the mechanism described in section~\ref{sec:symmetry_breaking_coordination}; when the activities are substitutable, players would not have an incentive to choose strategy~$11$, so once cascade occurs, either strategy~$01$ \emph{or} $10$ would prevail. The AME and MF solutions suggest the existence of a symmetric equilibrium, but such an equilibrium would not be achieved in numerical simulations where players are connected in a heterogeneous way. 

Fig.~\ref{fig:path_vs_deltas} also shows the values of $p$ such that strategy pair $(s_i,s_i)$ is $p$-dominant (denoted by $p_{s_i}$ in Fig.~\ref{fig:path_vs_deltas}, \emph{middle}). As argued in section~\ref{sec:p-dominance}, if the probability distribution of neighbors' strategies is approximated by the aggregate shares of each strategy, i.e., $\pi(s)\approx \rho^{s}$ for all $s\in S$, then $p_{s_i}$ would be roughly interpreted as a (sufficient) threshold of $\rho^{s_i}$ above which the strategy $s_i$ will prevail. We see from Fig.~\ref{fig:path_vs_deltas} that in the cases of $\delta=0$ and $2$, in which $s=11$ prevails, $\rho^{01}$ does not exceed $p_{01}$ while strategy~$01$ gained some extent of popularity when $t$ is between $5$ and $8$. On the other hand, $\rho^{11}$ exceeds $p_{11}$ around those times and continued to gain popularity until it reaches the steady state. When $\delta=-2$, in contrast, no strategy exceeds the corresponding $p$ to be adopted by the majority of players.

\begin{figure}[tb]
    \centering
    \includegraphics[width=15.8cm]{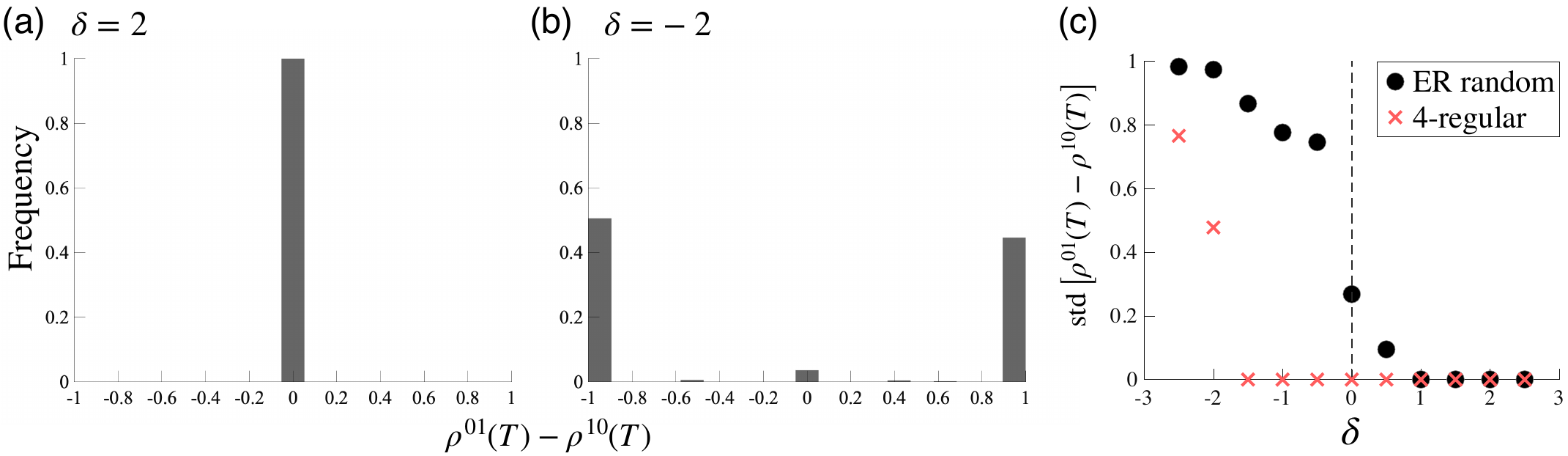}
    \caption{Histogram of the difference $\rho^{01}(T)-\rho^{10}(T)$ for \ER networks with (a) $\delta=2$ and (b) $\delta=-2$. Note that we always have $\rho^{01}(T)=\rho^{10}(T)$ when the activities are complements (i.e., $\delta=2$), but when they are substitutes (i.e., $\delta=-2$), we have either $(\rho^{01}(T),\rho^{10}(T))=(1,0)$ or $(0,1)$ in most of the 500 simulation runs. (c) Standard deviation of $\rho^{01}(T)-\rho^{10}(T)$ over 100 simulations. Black circle denotes the case of \ER random networks, and red cross denotes the case of random $z$-regular networks. In both panels, we set $z=4$, $\rho_0^{01}=\rho_0^{10}=0.01$, $\rho_0^{11}=0$, $a=b=4$, $c=1$ and $T=100$.}
    \label{fig:symmetry_breaking_dist}
\end{figure}

Figs.~\ref{fig:symmetry_breaking_dist}a and \ref{fig:symmetry_breaking_dist}b show the distributions of the difference $\rho^{01}(T)-\rho^{10}(T)$ for different values of $\delta$. When the two activities are complements, we always have $\rho^{01}(T)=\rho^{10}(T)$ (Fig.~\ref{fig:symmetry_breaking_dist}a), in which case the AME solution is accurate for every instance of simulated contagion processes.
In contrast, when they are substitutes, we have either $(\rho^{01}(T),\rho^{10}(T))=(1,0)$ or $(0,1)$ in most of the simulation runs. This suggests that either activity~$A$ or $B$ dominates with probability $\approx 0.5$, depending on the details of the structure of the network that  the AME or MF methods do not capture. Note that the AME solution will still describe the simulated equilibrium \emph{on average}, but it does not necessarily mean that the AME solution is accurate for every instance of the propagation processes. In general, there is a negative relationship between $\delta$ and the standard deviation of $\rho^{01}(T)-\rho^{10}(T)$, and the essential result also holds true for random regular networks (Fig.~\ref{fig:symmetry_breaking_dist}c).

\subsection{Dynamics of diffusion: asymmetric payoffs}\label{sec:asymmetric_coordination}

 As we saw in the previous section, symmetry-broken diffusion is occasionally observed when the degree of substitutability $\delta$ is low (Fig.~\ref{fig:symmetry_breaking_dist}), in which case the approximation methods do not accurately predict the path of simulated $\rho^s(t)$. However, when there is a certain extent of intrinsic asymmetry between the two activities, the AME method works quite well even for substitutable activities. If $a>b$, for example, activity~$A$ is always preferred to activity~$B$, other things being equal, and accordingly the number of $10$-players will be smaller than the number of $01$-players. This makes it difficult to have a situation where the total payoff $v(10)$ is comparable to $v(01)$, which would be achieved if a certain fraction of neighbors adopt strategy~$10$ while a smaller fraction of neighbors adopts strategy~$01$. Thus, the unstable nature of the diffusion process that we see for equally attractive activities will not materialize when the intrinsic attractiveness is different.
 
 \begin{figure}[tb]
    \centering
    \includegraphics[width=15.8cm]{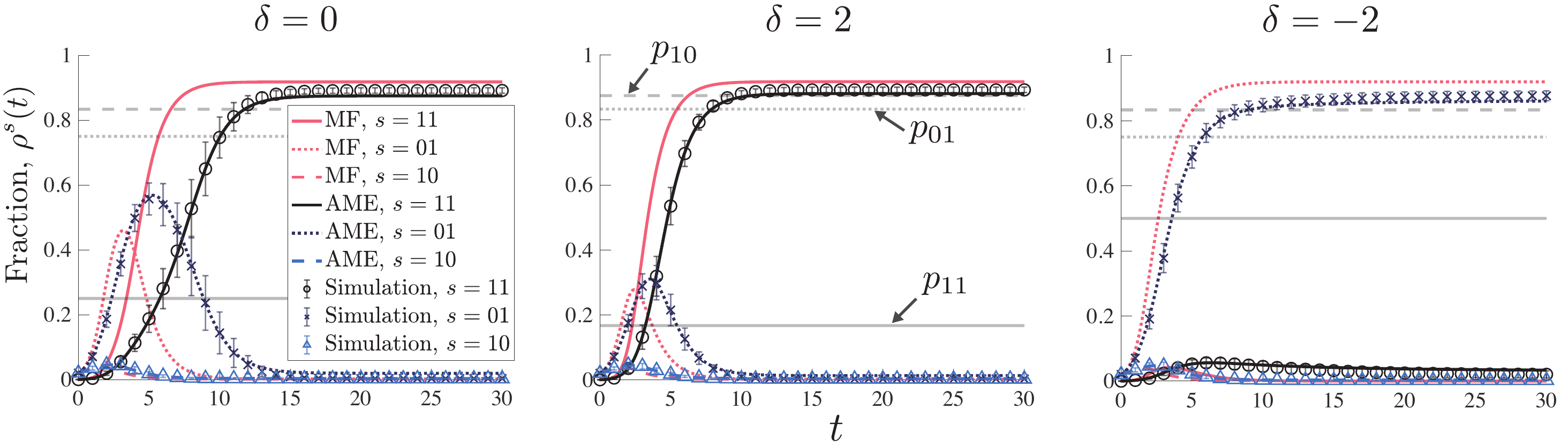}
    \caption{Diffusion of asymmetric activities ($a=6$ and $b=4$). See caption of Fig.~\ref{fig:path_vs_deltas} for details.}
    \label{fig:asymmetric_path_vs_deltas}
\end{figure}
 
 Fig.~\ref{fig:asymmetric_path_vs_deltas} shows the dynamical path of $\rho^s(t)$ when $a>b$ (The other parameters are the same as those in Fig.~\ref{fig:path_vs_deltas}). When the activities are neutral ($\delta=0$, Fig.~\ref{fig:asymmetric_path_vs_deltas}, \emph{left}), strategy~$01$ initially spreads to more than half of the players, and then most of the players begin to shift their strategies to $s=11$, which leads to a decay in the fraction of $01$-players.  In the context of the phase diagram in Fig.~\ref{fig:symmetry_breaking_delta0}, this corresponds to a situation where the feasible region $\rho^{01}+\rho^{10}+\rho^{11}\leq 1$ continues to shrink, and thereby $\rho^{01}$ is bounded from above by $1-\rho^{10}-\rho^{11}$. Note also that $\rho^{01}$ did not exceed $p_{01}$, meaning that $(01,01)$ was not $p$-dominant. 
 
 When the activities are substitutes ($\delta<0$), strategy~$01$ dominates the other strategies (Fig.~\ref{fig:asymmetric_path_vs_deltas}, \emph{right}). Since the total payoff of choosing activity~$B$ cannot be comparable to that of activity~$A$ in the process of diffusion, the simulated paths are quite stable over different runs, and thus the standard deviations are fairly small. Note that $\rho^{01}$ surpasses $p_{01}$ during the diffusion process, which implies that the strategy pair $(01,01)$ is $p$-dominant.

\section{Alternative model of diffusion}\label{sec:utility-based}

We have seen that players' interactions through $4\times 4$ coordination games provide a mechanism by which an activity spreads over a network. To check the robustness of the results, here we consider an alternative model of diffusion in which players select their strategies to maximize utility functions. 

Let $x_i^\ell\in \{0,1\}$ denote player-$i$'s binary action on activity $\ell\in\{A,B\}$. The strategy of player $i$ is expressed by vector ${\bf{x}}_i=(x_i^A,x_i^B)$, and the strategies of the other players are summarized as ${\bf{x}}_{-i}=\allowbreak (x_1^A,x_1^B,\ldots,x_{i-1}^A,x_{i-1}^B, x_{i+1}^A,x_{i+1}^B,\ldots,x_{N}^A,x_{N}^B)$.  Note that there is a one-to-one correspondence between ${\bf{x}}_i\in\{(0,0),(0,1),(1,0),(1,1)\}$ and $s_i\in\{00,01,10,11\}$.  Following \cite{chen2018AEJmultiple}, we consider the following quadratic utility function:
\begin{align}
 u_i(s_i,s_{-i}) = \sum_{\ell=A,B} \left\{ \alpha^\ell x_i^\ell  -\frac{1}{2} (x_i^\ell)^2    + \gamma\sum_{j \neq i}g_{ij}x_i^\ell x_j^\ell \right\} + \frac{1}{2}\beta x_i^A x_i^B
\end{align}
where $g_{ij}\in \{0,1\}$ denotes the $(i,j)$th element of the adjacency matrix $\mathcal{G}=\left( g_{ij}\right)$; $g_{ij}= 1$ if there is an edge between $i$ and $j$ and $g_{ij}=0$ otherwise. $\alpha^\ell$ is a parameter that captures the benefit of enjoying an activity itself, which entails costs ($=1/2$). $\gamma$ and $\beta$ respectively denote the benefit of cooperating with neighbors and the degree of substitutability. Thus, the larger $\beta$ is, the greater the complementarity between the two activities. 

\subsection{Absolute-threshold rules}

The utility for each strategy is given by
\begin{align}
    u_i(00,s_{-i}) &= 0, \\
    u_i(01,s_{-i}) &= \alpha^A -\frac{1}{2} + \gamma (m_{01}+m_{11}), \\
    u_i(10,s_{-i}) &= \alpha^B -\frac{1}{2} + \gamma (m_{10}+m_{11}), \\
    u_i(11,s_{-i}) &= \alpha^A + \alpha^B -1+\frac{\beta}{2} + \gamma (m_{10}+m_{01} + 2m_{11}).
\end{align}
Since the strategy of a player can be considered as a function of the neighbors' strategy profile $\vecm =(m_{00},m_{01},m_{10},m_{11})^\top$, we can redefine the utility function as $\tilde{u}_i(s,\vecm)\equiv u(s,s_{-i})$. The optimal strategy is then given by
\begin{align}
    s^*(\vecm ) =  \underset{s\in S}{\rm arg\,max}\; \tilde{u}(s,\vecm),
    \label{eq:optimal_s_utility}
\end{align}
 where we drop the subscript $i$ for simplicity.
 
 To analyze contagious effects, we impose the following assumptions:
\begin{assumption}
$\alpha^A \leq 1/2$, $\alpha^B \leq 1/2$, and $\alpha^A + \alpha^B -1 + \beta/2 \leq 0$.
\label{ass:utility_parameters}
\end{assumption}
These assumptions guarantee that $\tilde{u}(s,(k,0,0,0)^\top )\leq 0\; \forall s\in S$, which prohibits players from adopting strategies other than $00$ when no neighbors enjoy activities. This means that neighbors' influence is necessary for an activity to spread over the network.
Since $\tilde{u}(00,\vecm)=0$, a player selects a strategy other than $00$ as long as the utility is positive.\footnote{If there are tie values (i.e., $\tilde{u}(s,\vecm)=\tilde{u}(s',\vecm)$ for $s\neq s'$), we randomly select one strategy. If $\tilde{u}(s,\vecm)\leq 0$ for all $s\in S$, however, we select $s=00$.} 

The conditions for $\tilde{u}({01},\vecm)>0$ and $\tilde{u}({10},\vecm)>0$ are respectively given by the following threshold conditions:
\begin{align}
   \tilde{u}({01},{\bf{m}}) > 0 \hspace{8pt} {\rm{iff} }\hspace{8pt} m_{01}+m_{11} > \frac{1}{\gamma}\left(\frac{1}{2}-\alpha^{A}\right), \label{eq:u01_threshold} \\ 
   \tilde{u}({10},{\bf{m}}) > 0 \hspace{8pt} {\rm{iff} }\hspace{8pt} m_{10}+m_{11} > \frac{1}{\gamma}\left(\frac{1}{2}-\alpha^{B}\right).\label{eq:u10_threshold}
\end{align}
The most important difference from the threshold conditions in coordination games, Eqs.~\eqref{eq:v01_threshold} and \eqref{eq:v10_threshold}, is that conditions \eqref{eq:u01_threshold} and \eqref{eq:u10_threshold} are independent of the player's degree $k$. This indicates that a player is more likely to join activity~$A$ as more neighbors join activity~$A$, regardless of the fraction of active neighbors. This also implies that the larger the degree, the more likely it is for an activity to spread over a network. This type of diffusion has been studied within a class of \emph{absolute-threshold models}~\citep{Granovetter1978,Karimi2013PhysicaA,unicomb2021dynamics}.

The conditions $\tilde{u}(11,\vecm)>\tilde{u}(01,\vecm)$ and $\tilde{u}(11,\vecm)>\tilde{u}(10,\vecm)$ are respectively rewritten as
\begin{align}
      \tilde{u}({11},{\bf{m}}) > \tilde{u}(01,\vecm) \hspace{8pt} {\rm{iff} }\hspace{8pt} m_{10}+m_{11} > \frac{1}{\gamma}\left(\frac{1}{2}-\alpha^{B}-\frac{\beta}{2} \right), \label{eq:u01_u11} \\
           \tilde{u}({11},{\bf{m}}) > \tilde{u}(10,\vecm) \hspace{8pt} {\rm{iff} }\hspace{8pt} m_{01}+m_{11} > \frac{1}{\gamma}\left(\frac{1}{2}-\alpha^{A}-\frac{\beta}{2} \right),
     \label{eq:u10_u11}
\end{align}
Suppose for the moment that $\beta> 0$. Then Eq.~\eqref{eq:u01_u11} indicates that the threshold of the number of neighbors enjoying activity~$B$ ($=m_{10}+m_{11}$) above which strategy~$11$ is preferable to strategy~$01$ is smaller than that required for strategy~$10$ to be more preferable than $00$ (Eq.~\ref{eq:u10_threshold}). This is because when $\beta> 0$, the utility of enjoying both activities is greater than or equal to the sum of the utilities for each activity.  
However, if the two activities are substitutable (i.e., $\beta <0$), the incentive to enjoy both activities is discouraged, so the condition for enjoying activity~$B$ \emph{in addition to} $A$ becomes more stringent than condition~\eqref{eq:u10_threshold}, which is the condition to enjoy activity $B$ or do nothing. 
The same argument also holds for the relationship between $\tilde{u}(11,\vecm)$ and $\tilde{u}(10,\vecm)$ (Eq.~\ref{eq:u10_u11}).

 Regarding the choice \emph{between} the two activities, we have 
 \begin{align}
      \tilde{u}(01,\vecm)>\tilde{u}(10,\vecm)\hspace{8pt} {\rm{iff} }\hspace{8pt}
      \alpha^{A}-\alpha^{B} + \gamma(m_{01}-m_{10}) > 0.
      \label{eq:u01_u10}
 \end{align}
 Whether this inequality holds depends on the relative number of active neighbors, $m_{01}-m_{10}$, which will be time-varying.
Thus, a player's strategy may switch from $01$ to $10$ or vice versa in the process of diffusion, depending on the neighbors' status.
 Therefore, diffusion processes in this utility-based model are generally non-monotonic, similar to the contagion processes through coordination games. 

\subsection{Symmetry breaking } 
\begin{figure}[tb]
    \centering
    \includegraphics[width=15.5cm]{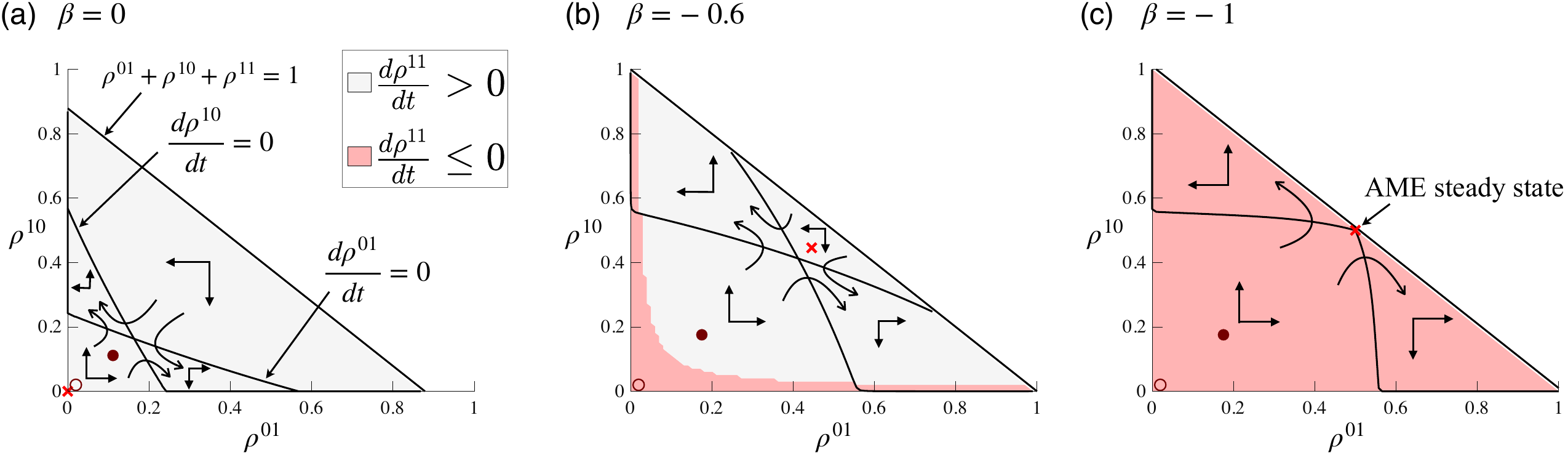}
    \caption{Phase diagram of diffusion through utility-based games. Each panel shows a diagram at $t=1$ based on random 4-regular networks with $\rho_0^{01}=\rho_0^{10}=0.02$, $\rho_0^{11}=0$, $\alpha^A=\alpha^B=0.4$, and $\gamma =0.2$. Red open and closed circles denote the initial and current points, respectively. Red cross denotes the steady state calculated by the AME.}
    \label{fig:symmetry_breaking_utility}
\end{figure}

The previous MF and AME approaches can still be applied in this model by replacing the optimal strategy in the response function \eqref{eq:response_func} with Eq.~\eqref{eq:optimal_s_utility}, which allows us to draw phase diagrams in the same way as before based on random regular networks (Fig.~\ref{fig:symmetry_breaking_utility}). Under our benchmark parameters $N=3000$, $\rho_0^{01}=\rho_0^{10}=0.02$, $\rho_0^{11}=0$, $\alpha^A=\alpha^B=0.4$, and $\gamma =0.2$, we find that $d\rho^{11}/dt$ is always positive when the two activities are neutral (Fig.~\ref{fig:symmetry_breaking_utility}a), which causes the feasible region of $(\rho^{01},\rho^{10})$ to shrink over time. Therefore, we will have $(\rho^{01}(T),\rho^{10}(T),\rho^{11}(T)) = (0,0,1)$ at the AME steady state when $\beta=0$ (denoted by red cross). 

On the other hand, as the substitutability parameter $\beta$ becomes negative, there arises a wider region within which $d\rho^{11}/dt \leq 0$ (Figs.~\ref{fig:symmetry_breaking_utility}b and c). In particular, when $\beta = -1$, we see that $d\rho^{11}/dt$ is non-positive for the entire feasible region of $(\rho^{01},\rho^{10})$. This suggests that the steady state obtained by the AME method is practically not attainable since any path deviating from the stable path (i.e., stable arm) would cause symmetry breaking, leading to $(\rho^{01}(T),\rho^{10}(T),\rho^{11}(T)) = (1,0,0)$ or $(0,1,0)$. Again, this is a situation in which one of the activities would dominate the other purely by chance, depending on the realization of an instance drawn from an ensemble of random networks. When the degree of substitutability is moderate (Fig.~\ref{fig:symmetry_breaking_utility}b), there is a wide region of $(\rho^{01},\rho^{10})$ in which $d\rho^{11}/dt>0$, so the feasible region may shrink over time. This will make the upper bounds for $\rho^{01}$ and $\rho^{10}$ well below $1$, resulting in a smaller degree of asymmetry compared to the case of $\beta=-1$.  

Fig.~\ref{fig:vsdelta_utility}a illustrates how the accuracy of the AME solutions is related to the degree of substitutability $\beta$ and the average degree $z$. We find that when $\beta=-1$, each simulation run will be likely to deviate from the AME solution, while the average over multiple simulations appears to be close to the value predicted by the AME. These deviations from the AME values can be interpreted as an outcome of incidental diffusion due to symmetry breaking, whose dynamics are captured by the phase diagrams in Fig.~\ref{fig:symmetry_breaking_utility}c.   

\begin{figure}[tb]
    \centering
    \includegraphics[width=15.5cm]{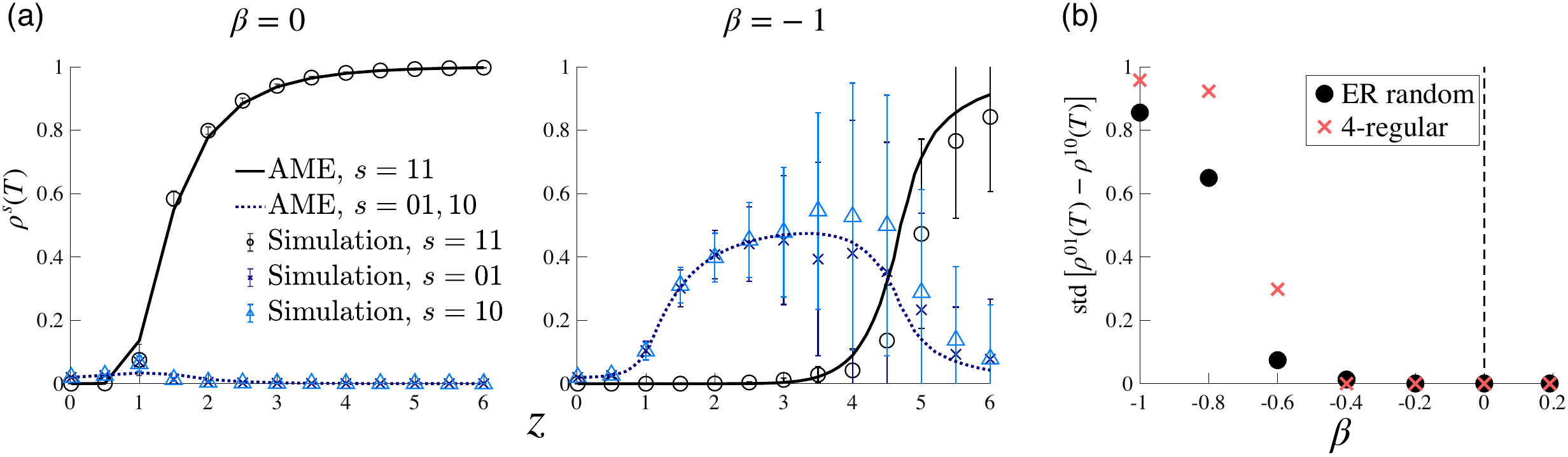}
    \caption{Effect of substitutability on symmetry breaking in the model of utility-based games. (a) Steady-state fraction of strategy $s$ against $z$ for a given substitutability $\beta$ in \ER networks. Error bars denote one standard deviations over $100$ runs. (b) Relationship between the standard deviations of $\rho^{01}(T)-\rho^{10}(T)$ and $\beta$ for $z=4$, $N=3000$, $\rho_0^{01}=\rho_0^{10}=0.02$, $\rho_0^{11}=0$, $\alpha^A=\alpha^B=0.4$, $\gamma =0.2$. $T$ is set at $50$ and $100$ for panels (a) and (b), respectively.  }
    \label{fig:vsdelta_utility}
\end{figure}

Naturally, we find that there is a negative relationship between the degree of substitutability $\beta$ and the standard deviations of $\rho^{01}(T)-\rho^{10}(T)$ (Fig.~\ref{fig:vsdelta_utility}b). When the two activities are strongly substitutable, it is highly likely that either of the two dominates the other (i.e., $\rho^{01}(T)=1$ or $\rho^{10}(T)=1$), resulting in the standard deviations being close to $1$. When the substitutability is moderate, symmetry breaking still occurs, but the difference between $\rho^{01}(T)$ and $\rho^{10}(T)$ would be well below $1$ because strategy~$11$ also prevails to some extent. In the phase diagram in Fig.~\ref{fig:symmetry_breaking_utility}b, this corresponds to an expansion of the area in which $d\rho^{11}/dt > 0$. 

\subsection{Asymmetric preferences}\label{sec:asymmetric_utility}

 Now we analyze the case of asymmetric preferences where there is no possibility of symmetry breaking by definition. In particular, we investigate the relationship between network connectivity and the steady-state share of each strategy. Throughout this section, we assume that activity $B$ is a little less attractive than activity $A$, where $\alpha^{A}= 0.4$ and $\alpha^{B}=0.3$.
 
 When the two activities are neutral (i.e., $\beta=0$), we see that lower connectivity would lead to a higher chance of strategy~$01$ being propagated over a network (Fig.~\ref{fig:asymmetry_utility}, \emph{left}). Under our parameter configurations, strategy~$01$ is the dominant strategy up to $z\approx 2.5$, but for more densely connected networks, strategy~$11$ dominates and both activities $A$ and $B$ are likely to prevail. Thus, the dominant strategy switches around $z\approx 2.5$, at which the simulated cascade sizes can be highly volatile and their average would not match the AME solution unless the number of runs is large enough.\footnote{In a simple binary cascade model, \cite{Watts2002} shows that cascade sizes are power-law distributed at a critical point above which large-size cascades can occur.}  
 When the activities are complements (e.g., $\beta=0.1$), strategy~$11$ would be dominant in well connected networks such that $z>1$ (Fig.~\ref{fig:asymmetry_utility}, \emph{middle}). Indeed, both activities will spread to most of the players located in the giant component of a network, as is seen in the standard binary cascade models~\citep{Watts2002}. When the activities are strongly substitutable (e.g., $\beta=-1$), only activity $A$ would gain popularity, while there would be no chance for $B$ to propagate (Fig.~\ref{fig:asymmetry_utility}, \emph{right}).

 It should be noted that there are some differences between the current utility-based model and the model of coordination games concerning the role of network connectivity. 
 First, in the diffusion mechanism through coordination games, there is an upper bound of $z$ above which propagation decays (Fig.~\ref{fig:heatmap}). This is because players' thresholds are given by the fractions of neighbors already participating in certain activities, where an increase in the number of neighbors would dilute the influence of each neighbor. However, such a dilution effect does not exist in the diffusion process through utility-based games since the threshold conditions \eqref{eq:u01_threshold}--\eqref{eq:u01_u10} depend only on the absolute numbers of neighbors adopting certain strategies. Therefore, there is no upper bound of $z$ for the activities to propagate in the utility-based model. 
 
 \begin{figure}[t]
    \centering
    \includegraphics[width=15.8cm]{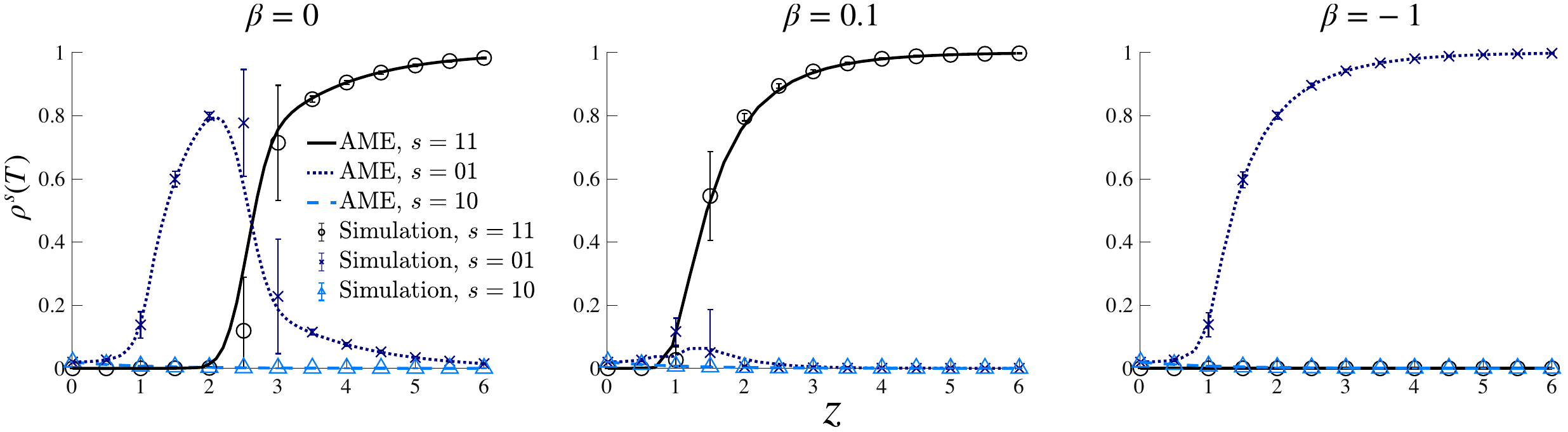}
    \caption{Steady state of diffusion through utility-based games with asymmetric preferences ($\alpha^A=0.4$ and $\alpha^B=0.3$). See caption of Fig.~\ref{fig:vsdelta_utility} for details.
    }
    \label{fig:asymmetry_utility}
\end{figure}

 Second, while there is a common property that a larger degree of substitutability will lead players to select a more attractive activity, we see an essential difference when the two activities are neutral. In the diffusion through utility-based games, lower connectivity would lead to a propagation of a single activity, as shown in Fig.~\ref{fig:asymmetry_utility}. In contrast, in the diffusion through coordination games, a lower degree of connectivity will enhance the propagation of both activities, while a higher degree of connectivity would allow either activity to spread (Fig.~\ref{fig:heatmap}). In the fractional threshold model, the influence of a single neighbor will be diluted as the number of neighbors increases, and accordingly, it becomes harder for an activity to gain popularity. In contrast, in the absolute threshold model, a rise in the number of neighbors just facilitates the transmission of peer effects, as it increases the number of routes through which information arrives.

\section{Empirical social networks}

Now we examine how accurately the AME approach can describe the dynamics of diffusion in a real-world social network. For this purpose, we construct a social network of economists based on the acknowledgments of the papers published in \textit{American Economic Review} (AER).  

\subsection{Data: a social network of AER authors}
To construct a network, we first create a list of authors who published at least one paper in AER between January 2019 and December 2020. During this period AER published 244 papers. 
Two authors are connected by an undirected and unweighted edge if one of them gives credit to the other in the acknowledgments of a published paper, assuming that there is a social tie between them.\footnote{To extract author names from the PDF file of each paper that was downloaded manually from the AER web site, we used a Python module Stanza~\citep{stanza}} The constructed network consists of some small isolated components, so we focus on the largest connected component on which a global diffusion could occur. A visualization of the network structure is presented in Fig.~\ref{fig:AER_graph}. We have $N=447$ with mean degree $z=4.99$ and maximum degree $k_{\rm max}=29$. 
\begin{figure}[tb]
    \centering
    \includegraphics[width=8cm]{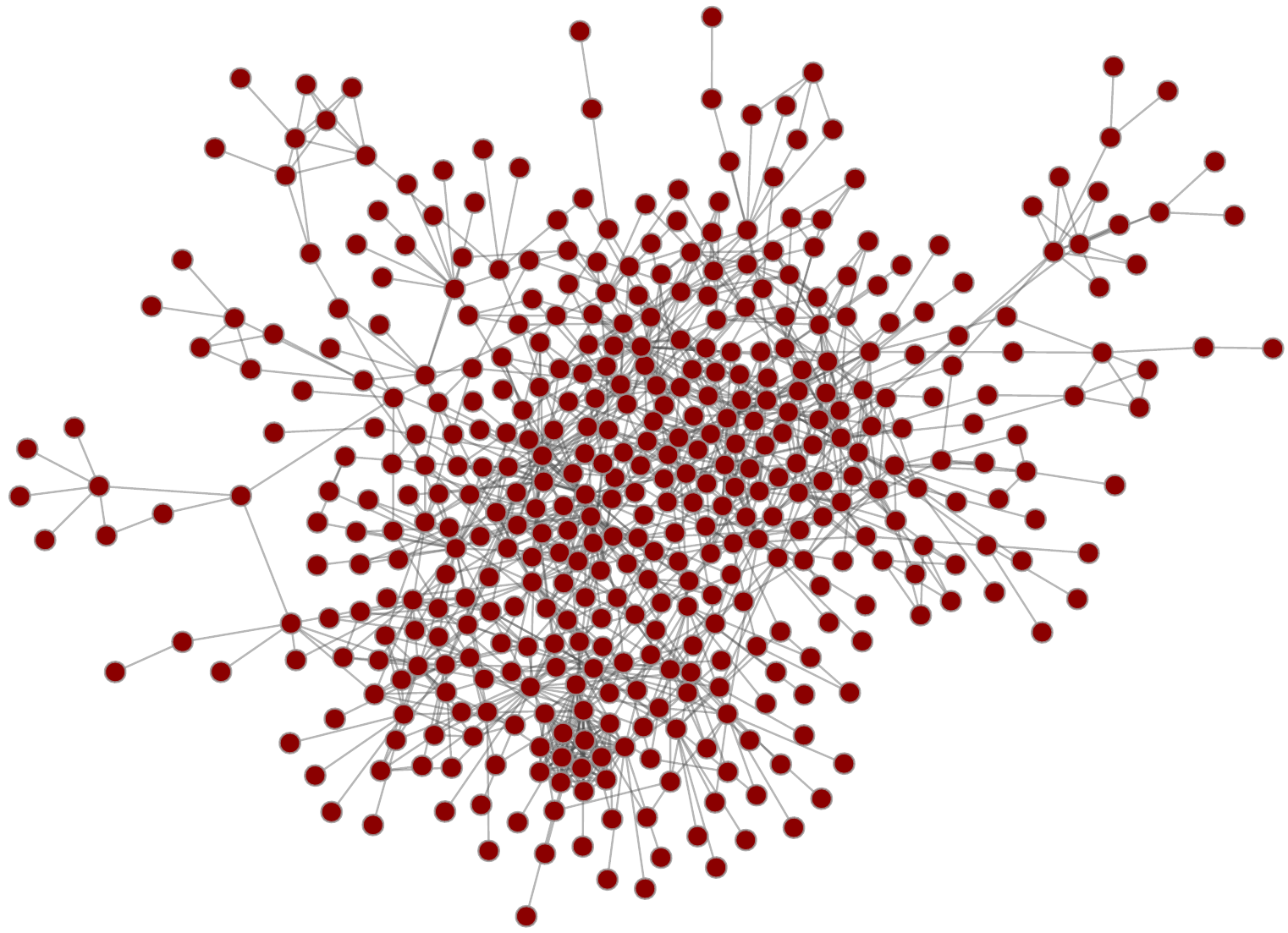}
    \caption{Social network of AER authors, 2019--2020. Nodes and edges respectively represent authors and the presence of mentions in the acknowledgments ($N=447$ and $z=4.99$).} 
    \label{fig:AER_graph}
\end{figure}
\begin{figure}[tb]
    \centering
    \includegraphics[width=15.8cm]{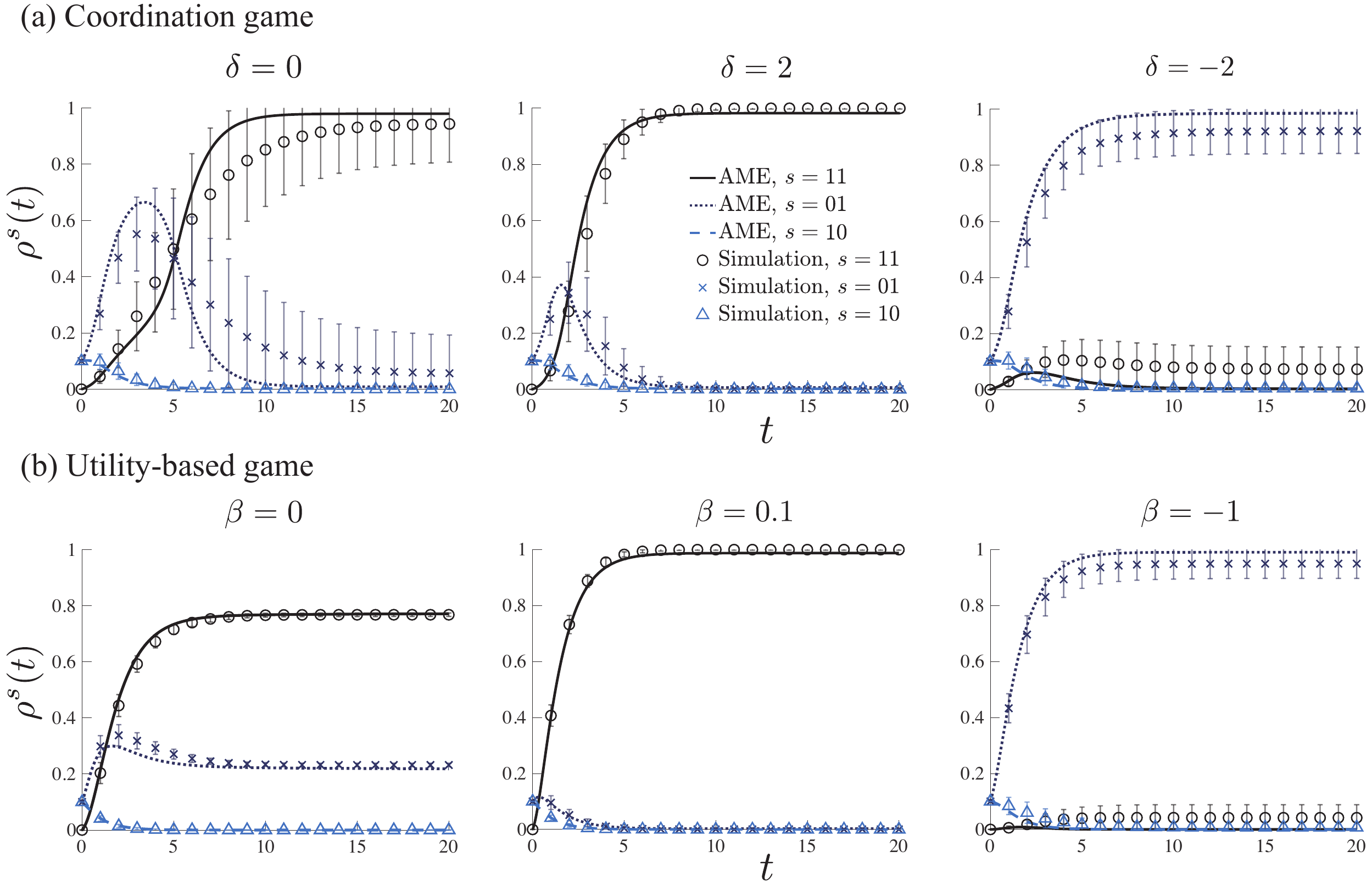}
    \caption{Diffusion processes on the AER author network. (a) Coordination game: $a=6$, $b=4$, and $c=1$. (b) Utility-based game: $\alpha^A=0.4$, $\alpha^B=0.3$, $\gamma =0.2$. In both panels we set $\rho_0^{01}=\rho_0^{10}=0.1$, and $\rho_0^{11}=0$. Error bars denote one standard deviations over $1000$ runs.}
    \label{fig:AER_sim}
\end{figure}

\subsection{Diffusion on a social network of economists}

We examine how well the AME will predict the dynamical paths for each strategy. We consider asymmetric activities in the same way as in sections~\ref{sec:asymmetric_coordination} and \ref{sec:asymmetric_utility} (i.e., $a>b$ and $\alpha >\beta$) since the AME works well on synthetic networks when there is intrinsic asymmetry.
To calculate the AME solution, we obtain the empirical degree distribution $p_k^{\rm e}$ as
\begin{align}
    p_{k}^{\rm e} = \frac{\# \text{ of nodes with degree $k$}}{N}.
\end{align}
We find that in both types of games, the simulated diffusion dynamics are well replicated by the AME approach (Fig.~\ref{fig:AER_sim}). A difference from the simulations based on synthetic networks is that we see larger standard deviations for the fraction of each strategy, indicated by the error bars, while the two activities are intrinsically asymmetric. This would be because the size of the author network is relatively small, thereby deteriorating the accuracy of the approximation method which is expected to work well for sufficiently large networks. To mitigate the small-size problem, we imposed a relatively high value for the seed fractions,  $\rho_0^{01}=\rho_0^{10}=0.1$, yet we still see some fluctuations in the simulated results. Arguably, the standard deviations would diminish if we use larger empirical networks, such as networks aggregated over longer periods. However, the maximum degree $k_{\rm max}$ of such a large network tends to be so high that the computational cost of the AME would be prohibitively expensive.

 \section{Conclusion and discussion}
 
 We studied the dynamics of diffusion based on two different models of network games. In both models, we showed that the diffusion dynamics exhibit instability due to phase transitions and symmetry breaking. When the activities are complements or neutral, the AME approach is highly accurate in predicting the popularity of each activity at a given point in time. On the other hand, when the activities are substitutable, ``average-based'' analytical equations such as MF and AME may not correctly describe the Nash equilibria attained in simulated propagation processes.
 
 There are some issues to be addressed in future research. 
 First, while we considered two activities and four pure strategies $\{00,01,10,11\}$, one could extend the models to study more than two activities since, in principle, the AME method can be applied for any number of states (or strategies) $n$. However, the difficulty is that the number of differential equations will increase to $n\sum_{k=0}^{k_{\rm max}}\binom{n+k-1}{k}$, so the computational cost would easily become prohibitive. For instance, if we have three types of activities and a degree distribution over $k=0$ to $10$, then the number of possible strategies is given by $n=2^3$, and the number of differential equations will be $8\sum_{k=0}^{10}\binom{8+k-1}{k}= 350,064$. 
 Second, it would be useful to introduce non-random structural properties, such as clustering and assortativity, which are frequently observed in real social networks~\citep{Barabasi2016book,newman2018book2nd}. 
 Since the AME system is already complex and there are a number of equations to be solved, introducing additional network properties would be a difficult task. Nevertheless, incorporating structural properties would be important not only to improve the accuracy of the AME running on 
 real data, but also enhances the predictability of diffusion. For example, firms would aim to enhance the popularity of new products through online social networks~\citep{Watts2007}, and governments may need to monitor the spread of misinformation and manipulations that could spread through social media~\citep{Badaway2019Political}. Third, it would also be useful to take into account the time-varying aspects of real-world networks. In many social networks, nodes and edges frequently appear and disappear, leading to changes in the network structure.
 The dynamic nature of networks has been extensively investigated within the framework of ``\emph{temporal networks}'' in the field of network science~\citep{HolmeSaramaki2013book_Springer}. We hope that our study will stimulate further research in these directions.



\begin{thebibliography}{}

\bibitem[\protect\citeauthoryear{Acemoglu, Robinson, and Verdier}{Acemoglu
  et~al.}{2017}]{acemoglu2017asymmetric}
Acemoglu, D., J.~A. Robinson, and T.~Verdier (2017).
\newblock Asymmetric growth and institutions in an interdependent world.
\newblock {\em Journal of Political Economy\/}~{\em 125\/}, 1245--1305.

\bibitem[\protect\citeauthoryear{Arigapudi}{Arigapudi}{2020}]{arigapudi2020bilingual}
Arigapudi, S. (2020).
\newblock Transitions between equilibria in bilingual games under logit choice.
\newblock {\em Journal of Mathematical Economics\/}~{\em 86}, 24--34.

\bibitem[\protect\citeauthoryear{Badawy, Lerman, and Ferrara}{Badawy
  et~al.}{2019}]{Badaway2019Political}
Badawy, A., K.~Lerman, and E.~Ferrara (2019).
\newblock Who falls for online political manipulation?
\newblock In {\em Companion Proceedings of The 2019 World Wide Web Conference}, 162–168.

\bibitem[\protect\citeauthoryear{Ballester, Calv{\'o}-Armengol, and
  Zenou}{Ballester et~al.}{2006}]{ballester2006}
Ballester, C., A.~Calv{\'o}-Armengol, and Y.~Zenou (2006).
\newblock Who's who in networks. Wanted: The key player.
\newblock {\em Econometrica\/}~{\em 74\/}, 1403--1417.

\bibitem[\protect\citeauthoryear{Barab{\'{a}}si}{Barab{\'{a}}si}{2016}]{Barabasi2016book}
Barab{\'{a}}si, A.-L. (2016).
\newblock {\em {Network Science}}.
\newblock Cambridge University Press, Cambridge.

\bibitem[\protect\citeauthoryear{Bonacich}{Bonacich}{1987}]{bonacich1987power}
Bonacich, P. (1987).
\newblock Power and centrality: A family of measures.
\newblock {\em American Journal of Sociology\/}~{\em 92\/}, 1170--1182.

\bibitem[\protect\citeauthoryear{Chatterjee}{Chatterjee}{2017}]{chatterjee2017endogenous}
Chatterjee, A. (2017).
\newblock Endogenous comparative advantage, gains from trade and
  symmetry-breaking.
\newblock {\em Journal of International Economics\/}~{\em 109}, 102--115.

\bibitem[\protect\citeauthoryear{Chen, Zenou, and Zhou}{Chen
  et~al.}{2018}]{chen2018AEJmultiple}
Chen, Y.-J., Y.~Zenou, and J.~Zhou (2018).
\newblock Multiple activities in networks.
\newblock {\em American Economic Journal: Microeconomics\/}~{\em 10\/}, 34--85.

\bibitem[\protect\citeauthoryear{Erd\H{o}s and R\'enyi}{Erd\H{o}s and
  R\'enyi}{1959}]{Erdos1959PublMath}
Erd\H{o}s, P. and A.~R\'enyi (1959).
\newblock On random graphs.
\newblock {\em Publicationes Mathematicae\/}~{\em 6}, 290--297.

\bibitem[\protect\citeauthoryear{Fennell and Gleeson}{Fennell and
  Gleeson}{2019}]{fennell2019multistate}
Fennell, P.~G. and J.~P. Gleeson (2019).
\newblock Multistate dynamical processes on networks: Analysis through
  degree-based approximation frameworks.
\newblock {\em SIAM Review\/}~{\em 61\/}, 92--118.

\bibitem[\protect\citeauthoryear{Gai and Kapadia}{Gai and
  Kapadia}{2010}]{GaiKapadia2010}
Gai, P. and S.~Kapadia (2010).
\newblock {Contagion in financial networks}.
\newblock {\em Proceedings of the Royal Society A\/}~{\em 466}, 2401--2423.

\bibitem[\protect\citeauthoryear{Gleeson}{Gleeson}{2011}]{gleeson2011high}
Gleeson, J.~P. (2011).
\newblock High-accuracy approximation of binary-state dynamics on networks.
\newblock {\em Physical Review Letters\/}~{\em 107\/}, 068701.

\bibitem[\protect\citeauthoryear{Gleeson}{Gleeson}{2013}]{gleeson2013binary}
Gleeson, J.~P. (2013).
\newblock Binary-state dynamics on complex networks: Pair approximation and
  beyond.
\newblock {\em Physical Review X\/}~{\em 3\/}, 021004.

\bibitem[\protect\citeauthoryear{Gleeson and Porter}{Gleeson and
  Porter}{2018}]{gleeson2018message}
Gleeson, J.~P. and M.~A. Porter (2018).
\newblock Message-passing methods for complex contagions.
\newblock In S. Lehmann and Y.-Y. Ahn (Eds.), {\em Complex Spreading Phenomena in Social Systems}, 81--95. Springer.

\bibitem[\protect\citeauthoryear{Goyal and Janssen}{Goyal and
  Janssen}{1997}]{goyal1997non}
Goyal, S. and M.~C. Janssen (1997).
\newblock Non-exclusive conventions and social coordination.
\newblock {\em Journal of Economic Eheory\/}~{\em 77\/}, 34--57.

\bibitem[\protect\citeauthoryear{Granovetter}{Granovetter}{1978}]{Granovetter1978}
Granovetter, M. (1978).
\newblock {Threshold models of collective behavior}.
\newblock {\em American Journal of Sociology\/}~{\em 83\/}, 1420--1443.

\bibitem[\protect\citeauthoryear{Harsanyi and Selten}{Harsanyi and Selten}{1988}]{harsanyi1988general}
Harsanyi, J.~C., R.~Selten (1988).
\newblock {\em A General Theory of Equilibrium Selection in Games.}
\newblock MIT Press, Cambridge. 

\bibitem[\protect\citeauthoryear{Holme and Saram\"{a}ki}{Holme and
  Saram\"{a}ki}{2013}]{HolmeSaramaki2013book_Springer}
Holme, P. and J.~Saram\"{a}ki (2013).
\newblock {\em {Temporal Networks}}.
\newblock Springer-Verlag, Berlin.

\bibitem[\protect\citeauthoryear{Immorlica, Kleinberg, Mahdian, and
  Wexler}{Immorlica et~al.}{2007}]{immorlica2007role}
Immorlica, N., J.~Kleinberg, M.~Mahdian, and T.~Wexler (2007).
\newblock The role of compatibility in the diffusion of technologies through
  social networks.
\newblock In {\em Proceedings of the 8th ACM Conference on Electronic
  Commerce}, 75--83.

\bibitem[\protect\citeauthoryear{Jackson}{Jackson}{2008}]{Jackson2008book}
Jackson, M.~O. (2008).
\newblock {\em Social and Economic Networks}.
\newblock Princeton University Press, Princeton.

\bibitem[\protect\citeauthoryear{Jackson and Yariv}{Jackson and
  Yariv}{2007}]{jackson2007diffusion}
Jackson, M.~O. and L.~Yariv (2007).
\newblock Diffusion of behavior and equilibrium properties in network games.
\newblock {\em American Economic Review\/}~{\em 97\/}, 92--98.

\bibitem[\protect\citeauthoryear{Jackson and Zenou}{Jackson and
  Zenou}{2015}]{jackson-zenou2015games}
Jackson, M.~O. and Y.~Zenou (2015).
\newblock Games on networks.
\newblock In {\em Handbook of Game Theory with Economic Applications 4}, 95--163. Elsevier.

\bibitem[\protect\citeauthoryear{Kajii and Morris}{Kajii and
  Morris}{1997}]{kajii1997robustness}
Kajii, A. and S.~Morris (1997).
\newblock The robustness of equilibria to incomplete information.
\newblock {\em Econometrica 65\/}, 1283--1309.

\bibitem[\protect\citeauthoryear{Karimi and Holme}{Karimi and
  Holme}{2013}]{Karimi2013PhysicaA}
Karimi, F. and P.~Holme (2013).
\newblock {Threshold model of cascades in empirical temporal networks}.
\newblock {\em Physica A\/}~{\em 392}, 3476--3483.

\bibitem[\protect\citeauthoryear{Kobayashi and Onaga}{Kobayashi and
  Onaga}{2021}]{kobayashi2021dynamics}
Kobayashi, T. and T.~Onaga (2021).
\newblock Dynamics of diffusion on monoplex and multiplex networks: A message-passing approach.
\newblock {\em SSRN 3806211\/}.

\bibitem[\protect\citeauthoryear{Kreindler and Young}{Kreindler and
  Young}{2014}]{kreindler2014rapid}
Kreindler, G.~E. and H.~P. Young (2014).
\newblock Rapid innovation diffusion in social networks.
\newblock {\em Proceedings of the National Academy of Sciences USA\/}~{\em
  111\/}, 10881--10888.

\bibitem[\protect\citeauthoryear{Lelarge}{Lelarge}{2012}]{lelarge2012diffusion}
Lelarge, M. (2012).
\newblock Diffusion and cascading behavior in random networks.
\newblock {\em Games and Economic Behavior\/}~{\em 75\/}, 752--775.

\bibitem[\protect\citeauthoryear{L{\'o}pez-Pintado}{L{\'o}pez-Pintado}{2008}]{lopez2008GEBdiffusion}
L{\'o}pez-Pintado, D. (2008).
\newblock Diffusion in complex social networks.
\newblock {\em Games and Economic Behavior\/}~{\em 62\/}, 573--590.

\bibitem[\protect\citeauthoryear{L{\'o}pez-Pintado}{L{\'o}pez-Pintado}{2012}]{lopez2012GEBinfluence}
L{\'o}pez-Pintado, D. (2012).
\newblock Influence networks.
\newblock {\em Games and Economic Behavior\/}~{\em 75\/}, 776--787.

\bibitem[\protect\citeauthoryear{Matsuyama}{Matsuyama}{2002}]{matsuyama2002explaining}
Matsuyama, K. (2002).
\newblock Explaining diversity: Symmetry-breaking in complementarity games.
\newblock {\em American Economic Review\/}~{\em 92\/}, 241--246.

\bibitem[\protect\citeauthoryear{Matsuyama}{Matsuyama}{2004}]{matsuyama2004financial}
Matsuyama, K. (2004).
\newblock Financial market globalization, symmetry-breaking, and endogenous
  inequality of nations.
\newblock {\em Econometrica\/}~{\em 72\/}, 853--884.

\bibitem[\protect\citeauthoryear{Matsuyama}{Matsuyama}{2013}]{matsuyama2013trade}
Matsuyama, K. (2013).
\newblock Endogenous ranking and equilibrium lorenz curve across (ex ante)
  identical countries.
\newblock {\em Econometrica\/}~{\em 81\/}, 2009--2031.

\bibitem[\protect\citeauthoryear{McKay, Wormald, and Wysocka}{McKay
  et~al.}{2004}]{mckay2004short}
McKay, B.~D., N.~C. Wormald, and B.~Wysocka (2004).
\newblock Short cycles in random regular graphs.
\newblock {\em Electronic Journal of Combinatorics\/}~{\em 11}, 66--66.

\bibitem[\protect\citeauthoryear{Molloy and Reed}{Molloy and
  Reed}{1995}]{molloy1995critical}
Molloy, M. and B.~Reed (1995).
\newblock A critical point for random graphs with a given degree sequence.
\newblock {\em Random Structures \& Algorithms\/}~{\em 6\/}, 161--180.

\bibitem[\protect\citeauthoryear{Morris}{Morris}{2000}]{morris2000contagion}
Morris, S. (2000).
\newblock Contagion.
\newblock {\em Review of Economic Studies\/}~{\em 67\/}, 57--78.

\bibitem[\protect\citeauthoryear{Morris, Rob, and Shin}{Morris
  et~al.}{1995}]{morris1995p-dominance}
Morris, S., R.~Rob, and H.~S. Shin (1995).
\newblock $p$-dominance and belief potential.
\newblock {\em Econometrica\/}~{\em 63}, 145--157.

\bibitem[\protect\citeauthoryear{Newman}{Newman}{2018}]{newman2018book2nd}
Newman, M. (2018).
\newblock {\em Networks, 2nd ed.}
\newblock Oxford University Press, Oxford.

\bibitem[\protect\citeauthoryear{Oyama and Takahashi}{Oyama and
  Takahashi}{2015}]{oyama2015bilingual}
Oyama, D. and S.~Takahashi (2015).
\newblock Contagion and uninvadability in local interaction games: The
  bilingual game and general supermodular games.
\newblock {\em Journal of Economic Theory\/}~{\em 157}, 100--127.

\bibitem[\protect\citeauthoryear{Qi, Zhang, Zhang, Bolton, and Manning}{Qi
  et~al.}{2020}]{stanza}
Qi, P., Y.~Zhang, Y.~Zhang, J.~Bolton, and C.~D. Manning (2020).
\newblock {S}tanza: A python natural language processing toolkit for many human
  languages.
\newblock In {\em Proceedings of the 58th Annual Meeting of the Association for
  Computational Linguistics: System Demonstrations}, 101--108.

\bibitem[\protect\citeauthoryear{Sadler}{Sadler}{2020}]{sadler2020diffusion}
Sadler, E. (2020).
\newblock Diffusion games.
\newblock {\em American Economic Review\/}~{\em 110\/}, 225--70.

\bibitem[\protect\citeauthoryear{Unicomb, I{\~n}iguez, Gleeson, and
  Karsai}{Unicomb et~al.}{2021}]{unicomb2021dynamics}
Unicomb, S., G.~I{\~n}iguez, J.~P. Gleeson, and M.~Karsai (2021).
\newblock Dynamics of cascades on burstiness-controlled temporal networks.
\newblock {\em Nature Communications\/}~{\em 12\/}, 1--10.

\bibitem[\protect\citeauthoryear{Watts}{Watts}{2002}]{Watts2002}
Watts, D.~J. (2002).
\newblock A simple model of global cascades on random networks.
\newblock {\em Proceedings of the National Academy of Sciences USA\/}~{\em 99\/}, 5766--5771.

\bibitem[\protect\citeauthoryear{Watts and Dodds}{Watts and
  Dodds}{2007}]{Watts2007}
Watts, D.~J. and P.~S. Dodds (2007).
\newblock {Influentials, networks, and public opinion formation}.
\newblock {\em Journal of Consumer Research\/}~{\em 34\/}, 441--458.

\bibitem[\protect\citeauthoryear{Wormald}{Wormald}{1999}]{wormald1999models}
Wormald, N.~C. (1999).
\newblock Models of random regular graphs.
\newblock In J.~D. Lamb and D.~A. Preece (Eds.), {\em Surveys in Combinatorics}, 239--298. Cambridge University Press, Cambridge.

\bibitem[\protect\citeauthoryear{Young}{Young}{1993}]{young1993evolution}
Young, H.~P. (1993).
\newblock The evolution of conventions.
\newblock {\em Econometrica\/}~{\em 61}, 57--84.

\bibitem[\protect\citeauthoryear{Young}{Young}{2011}]{young2011dynamics}
Young, H.~P. (2011).
\newblock The dynamics of social innovation.
\newblock {\em Proceedings of the National Academy of Sciences USA\/}~{\em
  108\/}, 21285--21291.

\end{thebibliography}

\end{CJK*}

\end{document}